\documentclass[prb,aps,amssymb,twocolumn,notitlepage,floatfix]{revtex4-2}
\usepackage{amsmath}
\usepackage{amssymb}
\usepackage{amsthm}
\usepackage{amsfonts}
\usepackage{listings}
\usepackage{physics}
\usepackage{enumerate}
\usepackage{latexsym}
\usepackage{psfrag}
\usepackage{bm}
\usepackage{graphicx}
\usepackage[caption=false]{subfig}
\usepackage{blkarray}
\usepackage{color}
\usepackage[normalem]{ulem}

\def\selectlanguage#1{}
\usepackage{dsfont}
\usepackage{tabularx}
\usepackage{multirow}
\usepackage{xfp}
\usepackage{calc}
\usepackage[nomessages]{fp}
\usepackage{chemformula}
\usepackage{hyperref}
\usepackage{cleveref}
\usepackage{tikz}
\usepackage{empheq}
\usepackage{capt-of}  

\graphicspath{{images/}}
\NewExpandableDocumentCommand{\bettersquareroot}{O{16}m}{%
  \fpeval{round(sqrt(#2),#1)}%
}
\hypersetup{
     colorlinks = true,
     linkcolor = magenta,
     citecolor = magenta
}

\def\beq{\begin{equation}}
\def\eeq{\end{equation}}
\def\bal{\begin{aligned}}
\def\eal{\end{aligned}}
\def\sg135{$P4_2/mbc1^\prime$ (\# 135)}

\theoremstyle{plain}      
\newtheorem{theorem}{Theorem}

\theoremstyle{definition}
\newtheorem{definition}[theorem]{Definition}

\theoremstyle{remark}

\usepackage{xparse}

\NewDocumentCommand{\cop}{s t! O{k} O{\sigma}}{%
  c\IfBooleanT{#1}{^\dagger}_{\IfBooleanTF{#2}{#3}{\bm{#3}},#4}%
}

\begin{document}

\title{Decomposing momentum scales in the Hubbard Model: From Hatsugai-Kohmoto to Aubry-Andr\'e}

\author{Dmitry Manning-Coe}
\email{dmitry2@illinois.edu}
\affiliation{Department of Physics and Institute for Condensed Matter Theory, University of Illinois at Urbana-Champaign, Urbana IL, 61801-3080, USA}

\author{Barry Bradlyn}
\email{bbradlyn@illinois.edu}
\affiliation{Department of Physics and Institute for Condensed Matter Theory, University of Illinois at Urbana-Champaign, Urbana IL, 61801-3080, USA}

\begin{abstract}
    The all-to-all momentum coupling of the Hubbard interaction makes interacting lattice models generically unsolvable. In many settings, however, from Peierls instabilities to Moir\'e superlattice physics, the low-energy behavior is dominated by scattering at a few characteristic wavevectors. We exploit this by constructing a momentum-space clustering scheme that retains only a chosen subset of interaction channels. Our scheme can be considered a generalization of twist-averaged boundary conditions. In proving this, we also prove that our scheme can be considered as a generalization of Hatsugai-Kohmoto (HK) models, and all versions of the HK model previously considered in the literature arise as special cases. This shows that the surprising phenomenological success of HK models arises from their correspondence to the finite-site Hubbard model. In particular, the recently introduced ``Momentum-Mixing HK'' model corresponds to a specific choice of clustering limit, which is equal to the original finite-site Hubbard model with twist-averaged boundary conditions. Our scheme becomes particularly powerful when a spatially varying potential selects the dominant momentum channels. We demonstrate this on the one-dimensional analogue of interacting moir\'e systems: the Aubry-Andr\'e-Hubbard model. We show that for sufficiently strong onsite potential, clusters as small as two sites can recover the ground state energy to below $1\%$ error relative to DMRG benchmarks. This establishes that physically motivated momentum-space truncations can yield accurate low-energy descriptions at feasible computational cost, opening a path toward tractable interacting models of Moir\'e systems in two dimensions. Code for reproducing all numerical results is available at https://github.com/chainik1125/decomposing-hubbard.
\end{abstract}

\maketitle

\section{Introduction}

The Hubbard model embodies a fundamental tension in many-body
physics: kinetic energy favors delocalized states, while
interactions drive systems towards localization.
For the Hubbard model at intermediate coupling neither tendency is dominant and this competition gives rise to a wealth of correlated phenomena,
from Mott insulators and magnetism~\cite{arovas2022hubbard}, to unconventional
superconductivity~\cite{lee2006doping,keimer2015quantum,balents2020superconductivity}.
It also makes solving the model generically intractable: in momentum space the kinetic energy is diagonal while the interaction couples all momenta, and the resulting Hilbert space grows exponentially with system size.

In many physically important settings, however, macroscopic constraints conspire so that a small number of momentum-transfer channels dominate the low-energy behavior.
This occurs, for example, due to Fermi-surface geometry and nesting in one-dimensional and quasi-one-dimensional systems~\cite{giamarchi2003quantum,shankar1994renormalization,weinberg1986superconductivity,gruner1988dynamics,solyom1979fermi}.
Similar phenomena also occur due to emergent superlattice structure in moir\'e materials, where characteristic reciprocal vectors $\{\mathbf{G}_M\}$ set the scale for both hybridization and scattering~\cite{bistritzer2011bands,koshino2018maximally,han2024quantum}.
This suggests a natural organizing principle: rather than treating all momentum couplings on an equal footing, one can try to construct controlled approximations that retain only the physically dominant channels.

The original Hatsugai-Kohmoto (HK) model~\cite{TheOGHK1992} represents the most extreme version of such a truncation. It discards all inter-momentum coupling components of the interaction. This restores the decomposition of the Hilbert space into independent blocks of well-defined (crystal) momentum $\mathbf{k}$, making the model exactly solvable.
Despite its simplicity, the HK model reproduces a surprising range of Mott physics, including the metal-insulator transition and superconducting instabilities~\cite{2020philipSuperconductor,2022PhilipHKsuperconductor,zhao2023updated}.
The HK interaction, however, is infinitely long-ranged in real space, which obscures the relationship to the original Hubbard model and makes the topological properties of the resulting phases difficult to interpret~\cite{lsm2023,zhao2023failure,ma2025charge,guerci2025electrical,wagner2023mott,setty2024symmetry,setty2024electronic}.

A significant improvement is the ``momentum-mixing HK'' (MMHK) model~\cite{Mai2025momentummixings}, which partitions the Brillouin zone into groups of $n$ maximally separated momenta.
Upon zone folding, these play the role of orbital indices at each point in the reduced zone, yielding an $n$-orbital HK-type Hamiltonian~\cite{lsm2023}.
While this reintroduces some inter-momentum coupling, the MMHK construction does not allow precise control over \textit{which} momentum modes are retained.
This is a potentially significant limitation for moir\'e systems, where the physically relevant momentum transfers are those commensurate with the superlattice vectors $\{\mathbf{G}_M\}$, not those determined by maximal separation in the original Brillouin zone.

In this paper, we introduce a general ``channel-selective'' clustering scheme that partitions the Brillouin zone into interaction clusters of arbitrary size ($N_c$) and separation, retaining only those interaction processes for which all momenta lie within a common cluster.
This construction recovers the original HK model (at $N_c=1$), the MMHK model (for maximal separation), as well as all intermediate cases. It also reduces to the full Hubbard interaction in the limit that $N_c\to N$, the total number of orbitals in the system.

Our central result is that every such cluster-truncated Hamiltonian is unitarily equivalent to an ensemble of independent finite-site Hubbard models with $N_c$ sites and, in general, all-to-all hopping.
In the special case of maximal separation, the hopping matrix simplifies to nearest-neighbor form and the Hamiltonian within each cluster reduces to an $N_c$-site version of the original Hubbard model with twisted boundary condition phase set by the cluster index. 
The full Hamiltonian is then a sum over twist angles, providing an alternative derivation of the twist-averaged boundary condition (TABC) framework~\cite{poilblanc1991twisted,lin2001twist,qin2022hubbard} from a momentum-space perspective.
This equivalence explains the phenomenological success of HK models in reproducing Mott physics: it arises because HK models \textit{are} collections of finite-site Hubbard models.

We then show that non-maximal clustering schemes, which are available from our general construction but not from the MMHK framework, can outperform the maximal scheme in reproducing both the ground state energy and the average occupation of the benchmark DMRG approach. Our clustering scheme converges to the original, thermodynamically large, Hubbard model in the limit when the cluster size is equal to the system size.

The clustering scheme becomes particularly powerful when an external potential selects the dominant momentum channels.
We demonstrate this by analyzing the Aubry-Andr\'e-Hubbard (AAH) model, a one-dimensional analogue of interacting moir\'e systems~\cite{iyer2013mblquasiperiodic,schreiber2015observation}. In this case, we show that increasing the strength of the onsite potential can lead to convergence of our method to the numerically exact DMRG results for a very small number of cluster sites.
We first establish that the incommensurate AAH model is dual to the Aubry-Andr\'e model with HK interactions, connecting the localized and delocalized regimes.
We then show that, in the commensurate case, the onsite potential fuses interaction clusters into ``superclusters''.  In the $t=0$ limit, our clustering scheme exactly recovers the original Hubbard model. For sufficiently strong potential, clusters as small as two sites can recover the ground state energy to below $1\%$ relative error compared to DMRG benchmarks. 
 
A key motivation for this work is the treatment of interactions in moir\'e systems.
Current approaches to twisted bilayer graphene and related materials typically apply a momentum truncation only to the single-particle Hamiltonian, constructing a continuum model within a mini-Brillouin zone, and then add interactions as a separate, post-hoc step~\cite{koshino2018maximally,kwan2025meanfield}. Moreover, only a small number of reciprocal lattice vectors contribute significantly to the projected interaction~\cite{bernevig2021tbgIII}.
Our framework provides a route to treating the kinetic and interaction terms on an equal footing: the same set of retained momentum channels controls both the single-particle hybridization and the interaction processes kept in the truncated model.
Our results on the AAH model serve as a one-dimensional proof of concept for this program.

The structure of the paper is as follows: First, in~\cref{sec:hk_is_hubbard} we introduce a general momentum-space clustering scheme for the Hubbard interaction. We prove that every such scheme is unitarily equivalent to an ensemble of finite-site Hubbard models, providing an alternative derivation of twist-averaged boundary conditions (\cref{thm:HK_is_Hubbard}). We also show that our scheme recovers all previous HK models as special cases, thereby establishing that HK models are simply ensembles of finite site Hubbard models. Finally we show that non-maximal clustering schemes can outperform the accuracy of the MMHK model in recovering the ground state energy and filling at equivalent computational cost.

Next, we turn to the Aubry-Andr\'e-Hubbard model in ~\cref{sec:aah_dual_aahk}. We first show that the AAH model with incommensurate potential is dual to a momentum space ``Aubry-Andr\'e-HK'' (AAHK) model. We then demonstrate that, in the commensurate AAH model, small interaction clusters can quantitatively reproduce DMRG results when the onsite potential is sufficiently strong. We conclude with a discussion of how our results can be applied to interacting moir\'e models and to strongly correlated aperiodic systems more generally.

\section{HK models are finite Hubbard models}\label{sec:hk_is_hubbard}
We first provide a generalized construction for HK models which recovers the original~\cite{TheOGHK1992}, and ``momentum-mixing" \cite{Mai2025momentummixings} HK models as special cases. 
We then use this construction to show that HK models are unitarily equivalent to twist-averaged finite-site Hubbard models with a specific choice of twist angles.

\subsection{Cluster truncated interacting scheme}
\label{sec:general_construction}

We start from the Hubbard model in arbitrary dimension and with any number of hoppings $z$, each indexed by $a$:
\begin{equation}
    \label{eq:original_full_hubbard}
    H=\sum_{\mathbf{i}}\sum_{\sigma}\sum_{a=1}^{z}t^a_{\mathbf{i},\mathbf{i}+\mathbf{r}_a}c^\dagger_{\mathbf{i},\sigma}c_{\mathbf{i+r_a},\sigma}+h.c.+
    U\sum_{\mathbf{i}}n_{\mathbf{i},\uparrow}n_{\mathbf{i},\downarrow},
\end{equation}
with $\mathbf{r}_a$ a Bravais lattice vector denoting the separation for the $a$-th hopping. Transforming into momentum space, we can write the Hamiltonian as
\begin{equation}\label{eq:H_original_momentum}
    H=\sum_{\mathbf{k}}g(\mathbf{k})\cop\cop+
    \frac{U}{N}\sum_{\mathbf{k_1},\mathbf{k_2},\mathbf{q}}\cop*[k_1+q][\uparrow]\cop[k_1][\uparrow]\cop*[k_2-q][\downarrow]\cop[k_2][\downarrow],
\end{equation}
with the hopping factor given by:
\begin{equation}
    g(\mathbf{k})=\sum_{a=1}^{z}t^a e^{i\mathbf{k}\cdot \mathbf{r}_a}+h.c.,\textrm{  }
\end{equation}
where $\mathbf{k}_1,\mathbf{k}_2,$ and $\mathbf{q}$ are in the first Brillouin zone (BZ).

The interaction term in the original Hubbard model couples all momenta in the Brillouin Zone.
To interpolate between the full Hubbard interaction and the HK interaction, we introduce an ``interaction clustering scheme'' which partitions the Brillouin zone into disjoint clusters $\mathcal{C}$ with a fixed number of sites $N_c$, and retains only coupling between momentum modes \textit{within the same cluster} - as shown in \cref{fig:clustering_illustration}.
As we will show below, our construction recovers the MMHK model as the special case corresponding to choosing the momentum-space clusters to be of the maximum possible size but allows for arbitrary momentum modes to be clustered together. Formally, we define an interaction clustering scheme as:

\begin{definition}[Interaction clustering scheme]
\label{def:clustering_scheme}
Let \(\mathcal B\) be the set of momenta of a Brillouin zone in $D$ dimensions.
An \emph{interaction clustering scheme} with cluster size \(N_c\) is specified by the following data:
\begin{enumerate}
    \item A set of \(D\) cluster generator vectors
    \(\bm{\Delta_1},\ldots,\bm{\Delta_D} \in \mathcal B\).

    \item A finite set of within-cluster indices
    $I =\{
        (n^1_1,\ldots,n^1_D),\ldots,(n^{N_c}_1,\ldots,n^{N_c}_D)
    \}$ with $|I| = N_c $.

    \item A set of cluster representatives
    \(\mathcal K \subset \mathcal B\) such that the clusters
    \begin{align}
        \mathcal C_{\mathbf K}
        &=
        \left\{
        \mathbf K + \mathbf k(\mathbf n_j)
        \;:\;
        \mathbf n_j \in I
        \right\},
        &
        \mathbf k(\mathbf n_j)
        &=
        \sum_{d=1}^{D} n^j_d\,\bm{\Delta}_{d},
    \end{align}
    form a disjoint partition of \(\mathcal B\):
    \begin{equation}
        \mathcal B
        = \bigsqcup_{\mathbf K \in \mathcal K}
        \mathcal C_{\mathbf K}.
    \end{equation}
\end{enumerate}
\end{definition}

Given such a scheme, the cluster-truncated Hubbard interaction is obtained by retaining only those quartic terms for which all four momenta lie in a common cluster. 
Note that care must be taken to ensure that the clustering scheme is consistent with the periodic boundary conditions on the BZ. The index parameterization
\begin{equation}
    \mathbf{k}(\mathbf n)=\sum_{d=1}^{D} n_d\,\bm{\Delta}_{d},
    \qquad \mathbf n\in I,\label{eq:cluster_bc}
\end{equation}
does not by itself guarantee closure under addition and subtraction: even if
\(\mathbf{k}\in\mathcal{C}_{\mathbf K}\) and \(\mathbf{q}\in\mathcal{C}_{\mathbf 0}\) are admissible, it may happen that
\(\mathbf{k}\pm\mathbf{q}\notin \mathcal{C}_{\mathbf K}\).
Equivalently, a finite index set \(I\subset \mathbb Z^D\) need not be closed under addition. There are two natural conventions for treating such boundary processes: the \emph{discard} convention and the \emph{wrap} convention, defined as
\begin{enumerate}
    \item \textbf{Discard (open boundary in index space).}
    We set the corresponding matrix elements to zero, i.e. we keep only those terms in which all four momenta
    \(\mathbf{k}_1,\mathbf{k}_2,\mathbf{k}_1+\mathbf{q},\mathbf{k}_2-\mathbf{q}\) lie in \(\mathcal{C}_{\mathbf K}\).
    In this convention, the allowed momentum transfers \(\mathbf{q}\) depend on the location of \(\mathbf{k}_{1}\) and \(\mathbf{k}_{2}\) within the cluster, and the real-space interaction acquires ``edge'' corrections.
    \item  \textbf{Wrap (periodic boundary in index space).}
We treat each cluster as a periodic lattice in index space: 
when addition or subtraction of indices would leave the set $I$, 
we wrap them back using modular arithmetic. In order for this to be possible, we additionally require that \(I\) be a finite abelian group.
In the cases considered in this paper we take
\begin{equation}
    I=\mathbb Z_{M_1}\times\cdots\times \mathbb Z_{M_D},
    \qquad
    N_c=\prod_{d=1}^{D}M_d.
\end{equation}
Concretely, if the index set has $M_d$ values along generator direction $\bm{\Delta}_d$ 
(so that $N_c = \prod_d M_d$), we define
\begin{align}
    \mathbf{k}\oplus\mathbf{q}
    := \sum_{d=1}^{D}\big[(n_d+m_d)\bmod M_d\big]\,\bm{\Delta}_{d},
    \\
    \mathbf{k}\ominus\mathbf{q}
    := \sum_{d=1}^{D}\big[(n_d-m_d)\bmod M_d\big]\,\bm{\Delta}_{d},
    \label{eq:wrap_def_generalD}
\end{align}
whenever $\mathbf{k}=\sum_d n_d\bm{\Delta}_{d}$ 
and $\mathbf{q}=\sum_d m_d\bm{\Delta}_{d}$.
This restores closure under the allowed momentum transfers 
and removes edge effects.
\end{enumerate}
A simple example of the difference between the discard and wrap conventions occurs in one dimension with \(N_c=2\) and \(\Delta=\pi/2a\). 
In this case, \(\mathcal{C}_{K}=\{K,\;K+\pi/2a\}\): choosing \(\mathbf{k} = K+\pi/2a\) and \(\mathbf{q}=\pi/2a\) gives
\(\mathbf{k}+\mathbf{q}=K+\pi/a\notin\mathcal{C}_{K}\).
In the \emph{discard} convention this process is omitted,
whereas in the \emph{wrap} convention it is mapped back into \(\mathcal{C}_{K}\) by the identification
\(\mathbf{k}+\mathbf{q} \mod M=K+\pi/a  \mod \tfrac{\pi}{2a}=K \).

In the main text of this paper \textbf{we always use the wrap} convention. For completeness, in~\cref{sec:appendix_discard_scheme} we treat the minimal separation case in the discard scheme. We also note that we rescale $U/N \to U/N_c$ so that the interaction is normalized over the
retained $N_c$ momentum couplings rather than the original $N$ couplings. Equivalently, after truncation the interaction is no longer averaged over all $N$ momentum
couplings, but only over the retained $N_c$ couplings within each cluster. This is the choice which makes the resulting clustering directly comparable with the original Hubbard model at the same parameters. 

In this case the interaction takes the form:
\begin{equation}
    H^{\mathrm{int}}
    =
    \frac{U}{N_c}\sum_{\mathbf{K}\in\mathcal{K}}
    \sum_{\substack{\mathbf{k}_{1},\mathbf{k}_{2},\mathbf{q}\in \mathcal C_{\mathbf 0}}}
    c^\dagger_{\mathbf{K}+(\mathbf{k}_{1}\oplus\mathbf{q}),\uparrow}c_{\mathbf{K}+\mathbf{k}_{1},\uparrow}
    \,c^\dagger_{\mathbf{K}+(\mathbf{k}_{2}\ominus\mathbf{q}),\downarrow}c_{\mathbf{K}+\mathbf{k}_{2},\downarrow},
    \label{eq:cluster_truncated_interaction}
\end{equation}
where \(\oplus\) and \(\ominus\) denote wrapped addition and subtraction within the cluster.

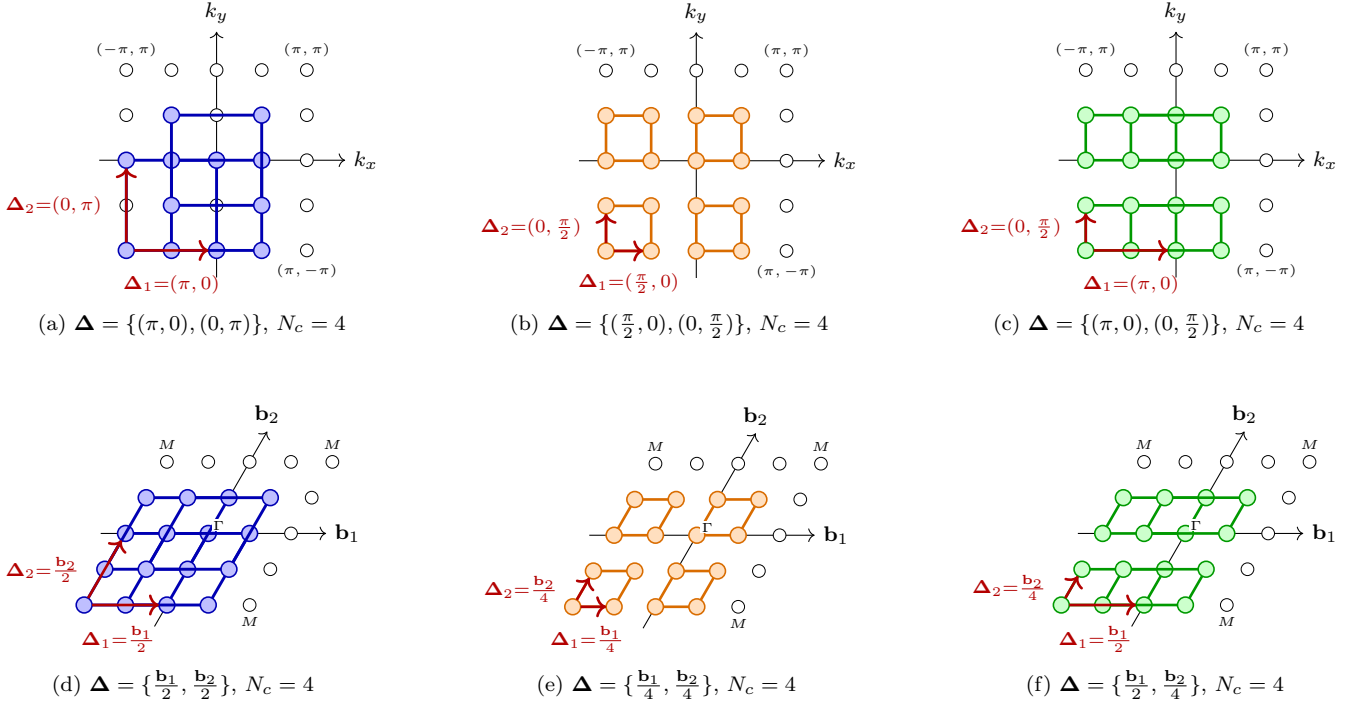
\begin{figure*}[t]
\centering

\subfloat[$\bm{\Delta}=\{(\pi,0),(0,\pi)\}$, $N_c=4$\label{fig:clustering_pi}]{%
\begin{tikzpicture}[x=0.60cm,y=0.60cm,font=\footnotesize]
  \draw[->] (-2.60,0) -- (2.85,0) node[right] {$k_x$};
  \draw[->] (0,-2.60) -- (0,2.85) node[above] {$k_y$};

  \foreach \x in {-2,-1,0,1,2} {
    \foreach \y in {-2,-1,0,1,2} {
      \filldraw[fill=white,draw=black,line width=0.35pt] (\x,\y) circle[radius=2.4pt];
    }
  }

  \foreach \x/\y in {-2/-2, -1/-2, -1/-1} {
    \draw[blue!70!black, line width=1.1pt] (\x,\y) rectangle ++(2,2);
    \foreach \dx/\dy in {0/0,2/0,0/2,2/2} {
      \filldraw[fill=blue!25,draw=blue!70!black,line width=0.6pt]
        (\x+\dx,\y+\dy) circle[radius=3.0pt];
    }
  }

  \draw[->,red!70!black,line width=1.0pt,shorten <=3pt,shorten >=3pt]
    (-2,-2) -- (0,-2);
  \node[below,red!70!black,font=\scriptsize] at (-1,-2.35) {$\bm{\Delta}_1{=}(\pi,0)$};
  \draw[->,red!70!black,line width=1.0pt,shorten <=3pt,shorten >=3pt]
    (-2,-2) -- (-2,0);
  \node[left,red!70!black,font=\scriptsize] at (-2.35,-1) {$\bm{\Delta}_2{=}(0,\pi)$};

  \node[anchor=north,font=\tiny,fill=white,inner sep=1pt,yshift=-4pt] at (2,-2) {$(\pi,-\pi)$};
  \node[anchor=south,font=\tiny,fill=white,inner sep=1pt,yshift=4pt] at (-2,2) {$(-\pi,\pi)$};
  \node[anchor=south,font=\tiny,fill=white,inner sep=1pt,yshift=4pt] at (2,2) {$(\pi,\pi)$};
\end{tikzpicture}%
}%
\hfill
\subfloat[$\bm{\Delta}=\{(\tfrac{\pi}{2},0),(0,\tfrac{\pi}{2})\}$, $N_c=4$\label{fig:clustering_pi2}]{%
\begin{tikzpicture}[x=0.60cm,y=0.60cm,font=\footnotesize]
  \draw[->] (-2.60,0) -- (2.85,0) node[right] {$k_x$};
  \draw[->] (0,-2.60) -- (0,2.85) node[above] {$k_y$};

  \foreach \x in {-2,-1,0,1,2} {
    \foreach \y in {-2,-1,0,1,2} {
      \filldraw[fill=white,draw=black,line width=0.35pt] (\x,\y) circle[radius=2.4pt];
    }
  }

  \foreach \x/\y in {-2/-2, 0/-2, -2/0, 0/0} {
    \draw[orange!85!black, line width=1.0pt] (\x,\y) rectangle ++(1,1);
    \foreach \dx/\dy in {0/0,1/0,0/1,1/1} {
      \filldraw[fill=orange!25,draw=orange!85!black,line width=0.6pt]
        (\x+\dx,\y+\dy) circle[radius=3.0pt];
    }
  }

  \draw[->,red!70!black,line width=1.0pt,shorten <=3pt,shorten >=3pt]
    (-2,-2) -- (-1,-2);
  \node[below,red!70!black,font=\scriptsize] at (-1.5,-2.30) {$\bm{\Delta}_1{=}(\tfrac{\pi}{2},0)$};
  \draw[->,red!70!black,line width=1.0pt,shorten <=3pt,shorten >=3pt]
    (-2,-2) -- (-2,-1);
  \node[left,red!70!black,font=\scriptsize] at (-2.35,-1.5) {$\bm{\Delta}_2{=}(0,\tfrac{\pi}{2})$};

  \node[anchor=north,font=\tiny,fill=white,inner sep=1pt,yshift=-4pt] at (2,-2) {$(\pi,-\pi)$};
  \node[anchor=south,font=\tiny,fill=white,inner sep=1pt,yshift=4pt] at (-2,2) {$(-\pi,\pi)$};
  \node[anchor=south,font=\tiny,fill=white,inner sep=1pt,yshift=4pt] at (2,2) {$(\pi,\pi)$};
\end{tikzpicture}%
}%
\hfill
\subfloat[$\bm{\Delta}=\{(\pi,0),(0,\tfrac{\pi}{2})\}$, $N_c=4$\label{fig:clustering_asym}]{%
\begin{tikzpicture}[x=0.60cm,y=0.60cm,font=\footnotesize]
  \draw[->] (-2.60,0) -- (2.85,0) node[right] {$k_x$};
  \draw[->] (0,-2.60) -- (0,2.85) node[above] {$k_y$};

  \foreach \x in {-2,-1,0,1,2} {
    \foreach \y in {-2,-1,0,1,2} {
      \filldraw[fill=white,draw=black,line width=0.35pt] (\x,\y) circle[radius=2.4pt];
    }
  }

  \foreach \x/\y in {-2/-2, -1/-2, -2/0, -1/0} {
    \draw[green!60!black, line width=1.0pt] (\x,\y) rectangle ++(2,1);
    \foreach \dx/\dy in {0/0,2/0,0/1,2/1} {
      \filldraw[fill=green!20,draw=green!60!black,line width=0.6pt]
        (\x+\dx,\y+\dy) circle[radius=3.0pt];
    }
  }

  \draw[->,red!70!black,line width=1.0pt,shorten <=3pt,shorten >=3pt]
    (-2,-2) -- (0,-2);
  \node[below,red!70!black,font=\scriptsize] at (-1,-2.35) {$\bm{\Delta}_1{=}(\pi,0)$};
  \draw[->,red!70!black,line width=1.0pt,shorten <=3pt,shorten >=3pt]
    (-2,-2) -- (-2,-1);
  \node[left,red!70!black,font=\scriptsize] at (-2.35,-1.5) {$\bm{\Delta}_2{=}(0,\tfrac{\pi}{2})$};

  \node[anchor=north,font=\tiny,fill=white,inner sep=1pt,yshift=-4pt] at (2,-2) {$(\pi,-\pi)$};
  \node[anchor=south,font=\tiny,fill=white,inner sep=1pt,yshift=4pt] at (-2,2) {$(-\pi,\pi)$};
  \node[anchor=south,font=\tiny,fill=white,inner sep=1pt,yshift=4pt] at (2,2) {$(\pi,\pi)$};
\end{tikzpicture}%
}%
\\[0.4cm]

\subfloat[$\bm{\Delta}=\{\tfrac{\mathbf{b}_1}{2},\tfrac{\mathbf{b}_2}{2}\}$, $N_c=4$\label{fig:graphene_clustering_large}]{%
\begin{tikzpicture}[
  x={(0.55cm,0cm)},
  y={(0.275cm,0.476cm)},
  font=\footnotesize
]
  \draw[->] (-2.60,0) -- (2.85,0) node[right] {$\mathbf{b}_1$};
  \draw[->] (0,-2.60) -- (0,2.85) node[above] {$\mathbf{b}_2$};

  \foreach \na in {-2,-1,0,1,2} {
    \foreach \nb in {-2,-1,0,1,2} {
      \filldraw[fill=white,draw=black,line width=0.35pt] (\na,\nb) circle[radius=2.4pt];
    }
  }

  \foreach \na/\nb in {-2/-2, -1/-2, -2/-1, -1/-1} {
    \draw[blue!70!black, line width=1.1pt]
      (\na,\nb) -- ++(2,0) -- ++(0,2) -- ++(-2,0) -- cycle;
    \foreach \da/\db in {0/0,2/0,0/2,2/2} {
      \filldraw[fill=blue!25,draw=blue!70!black,line width=0.6pt]
        (\na+\da,\nb+\db) circle[radius=3.0pt];
    }
  }

  \draw[->,red!70!black,line width=1.0pt,shorten <=3pt,shorten >=3pt]
    (-2,-2) -- (0,-2);
  \node[below,red!70!black,font=\scriptsize] at (-1,-2.35) {$\bm{\Delta}_1{=}\tfrac{\mathbf{b}_1}{2}$};
  \draw[->,red!70!black,line width=1.0pt,shorten <=3pt,shorten >=3pt]
    (-2,-2) -- (-2,0);
  \node[left,red!70!black,font=\scriptsize] at (-2.40,-1) {$\bm{\Delta}_2{=}\tfrac{\mathbf{b}_2}{2}$};

  \node[font=\tiny,fill=white,inner sep=0.5pt,anchor=south west] at (0.05,0.05) {$\Gamma$};
  \node[anchor=south,font=\tiny,fill=white,inner sep=0.5pt,yshift=4pt] at (2,2) {$M$};
  \node[anchor=south,font=\tiny,fill=white,inner sep=0.5pt,yshift=4pt] at (-2,2) {$M$};
  \node[anchor=north,font=\tiny,fill=white,inner sep=0.5pt,yshift=-4pt] at (2,-2) {$M$};

\end{tikzpicture}%
}%
\hfill
\subfloat[$\bm{\Delta}=\{\tfrac{\mathbf{b}_1}{4},\tfrac{\mathbf{b}_2}{4}\}$, $N_c=4$\label{fig:graphene_clustering_small}]{%
\begin{tikzpicture}[
  x={(0.55cm,0cm)},
  y={(0.275cm,0.476cm)},
  font=\footnotesize
]
  \draw[->] (-2.60,0) -- (2.85,0) node[right] {$\mathbf{b}_1$};
  \draw[->] (0,-2.60) -- (0,2.85) node[above] {$\mathbf{b}_2$};

  \foreach \na in {-2,-1,0,1,2} {
    \foreach \nb in {-2,-1,0,1,2} {
      \filldraw[fill=white,draw=black,line width=0.35pt] (\na,\nb) circle[radius=2.4pt];
    }
  }

  \foreach \na/\nb in {-2/-2, 0/-2, -2/0, 0/0} {
    \draw[orange!85!black, line width=1.0pt]
      (\na,\nb) -- ++(1,0) -- ++(0,1) -- ++(-1,0) -- cycle;
    \foreach \da/\db in {0/0,1/0,0/1,1/1} {
      \filldraw[fill=orange!25,draw=orange!85!black,line width=0.6pt]
        (\na+\da,\nb+\db) circle[radius=3.0pt];
    }
  }

  \draw[->,red!70!black,line width=1.0pt,shorten <=3pt,shorten >=3pt]
    (-2,-2) -- (-1,-2);
  \node[below,red!70!black,font=\scriptsize] at (-1.5,-2.30) {$\bm{\Delta}_1{=}\tfrac{\mathbf{b}_1}{4}$};
  \draw[->,red!70!black,line width=1.0pt,shorten <=3pt,shorten >=3pt]
    (-2,-2) -- (-2,-1);
  \node[left,red!70!black,font=\scriptsize] at (-2.40,-1.5) {$\bm{\Delta}_2{=}\tfrac{\mathbf{b}_2}{4}$};

  \node[font=\tiny,fill=white,inner sep=0.5pt,anchor=south west] at (0.05,0.05) {$\Gamma$};
  \node[anchor=south,font=\tiny,fill=white,inner sep=0.5pt,yshift=4pt] at (2,2) {$M$};
  \node[anchor=south,font=\tiny,fill=white,inner sep=0.5pt,yshift=4pt] at (-2,2) {$M$};
  \node[anchor=north,font=\tiny,fill=white,inner sep=0.5pt,yshift=-4pt] at (2,-2) {$M$};

\end{tikzpicture}%
}%
\hfill
\subfloat[$\bm{\Delta}=\{\tfrac{\mathbf{b}_1}{2},\tfrac{\mathbf{b}_2}{4}\}$, $N_c=4$\label{fig:graphene_clustering_asym}]{%
\begin{tikzpicture}[
  x={(0.55cm,0cm)},
  y={(0.275cm,0.476cm)},
  font=\footnotesize
]
  \draw[->] (-2.60,0) -- (2.85,0) node[right] {$\mathbf{b}_1$};
  \draw[->] (0,-2.60) -- (0,2.85) node[above] {$\mathbf{b}_2$};

  \foreach \na in {-2,-1,0,1,2} {
    \foreach \nb in {-2,-1,0,1,2} {
      \filldraw[fill=white,draw=black,line width=0.35pt] (\na,\nb) circle[radius=2.4pt];
    }
  }

  \foreach \na/\nb in {-2/-2, -1/-2, -2/0, -1/0} {
    \draw[green!60!black, line width=1.0pt]
      (\na,\nb) -- ++(2,0) -- ++(0,1) -- ++(-2,0) -- cycle;
    \foreach \da/\db in {0/0,2/0,0/1,2/1} {
      \filldraw[fill=green!20,draw=green!60!black,line width=0.6pt]
        (\na+\da,\nb+\db) circle[radius=3.0pt];
    }
  }

  \draw[->,red!70!black,line width=1.0pt,shorten <=3pt,shorten >=3pt]
    (-2,-2) -- (0,-2);
  \node[below,red!70!black,font=\scriptsize] at (-1,-2.35) {$\bm{\Delta}_1{=}\tfrac{\mathbf{b}_1}{2}$};
  \draw[->,red!70!black,line width=1.0pt,shorten <=3pt,shorten >=3pt]
    (-2,-2) -- (-2,-1);
  \node[left,red!70!black,font=\scriptsize] at (-2.40,-1.5) {$\bm{\Delta}_2{=}\tfrac{\mathbf{b}_2}{4}$};

  \node[font=\tiny,fill=white,inner sep=0.5pt,anchor=south west] at (0.05,0.05) {$\Gamma$};
  \node[anchor=south,font=\tiny,fill=white,inner sep=0.5pt,yshift=4pt] at (2,2) {$M$};
  \node[anchor=south,font=\tiny,fill=white,inner sep=0.5pt,yshift=4pt] at (-2,2) {$M$};
  \node[anchor=north,font=\tiny,fill=white,inner sep=0.5pt,yshift=-4pt] at (2,-2) {$M$};

\end{tikzpicture}%
}%
\caption{Illustration of momentum-space clustering on a square lattice (top) and a triangular lattice as in graphene (bottom). Red arrows indicate the generator vectors $\bm{\Delta}_1$, $\bm{\Delta}_2$ connecting momenta within a cluster. Top row: \(\pi/2\)-spaced \(5\times5\) square grid. (a)~$\bm{\Delta}=\{(\pi,0),(0,\pi)\}$: each cluster (blue) is a $\pi\times\pi$ square. (b)~$\bm{\Delta}=\{(\frac{\pi}{2},0),(0,\frac{\pi}{2})\}$: each cluster (orange) is a $\frac{\pi}{2}\times\frac{\pi}{2}$ square. (c)~Asymmetric $\bm{\Delta}=\{(\pi,0),(0,\frac{\pi}{2})\}$: clusters (green) are $\pi\times\frac{\pi}{2}$ rectangles, illustrating independent control of the spacing in each direction. Bottom row: analogous clustering on the reciprocal lattice of the triangular Bravais lattice (as in graphene), displayed in the parallelogram BZ along $\mathbf{b}_1$, $\mathbf{b}_2$ (at $60^\circ$). Clusters are parallelogram-shaped; $\Gamma$ is labeled at the origin. The three corner points of the parallelogram BZ are all equivalent to the $M$ point of the hexagonal BZ, related by reciprocal lattice vectors. The honeycomb sublattice enters as an orbital index.}
\label{fig:clustering_illustration}
\end{figure*}

The total Hamiltonian can thus be decomposed as a sum over cluster Hamiltonians $H_{\mathbf{K}}$ as
\begin{align}
    H_\mathbf{\Delta}&=\sum_{\mathbf{K}}^{N/N_c}H_{\mathbf{K}} \nonumber \\
    H_{\mathbf{K}}&=\sum_{k_j=1}^{N_c}
    g(\mathbf{K}+\mathbf{k_j})\cop*[K+k_j]\cop[K+k_j] \nonumber \\ 
    +&\frac{U}{N_c}
    \sum_{\mathbf{k}_{1},\mathbf{k}_{2},\mathbf{q}\in \mathcal{C}_{\mathbf{0}}}
    c^\dagger_{\mathbf{K}+(\mathbf{k}_{1}\oplus\mathbf{q}),\uparrow}c_{\mathbf{K}+\mathbf{k}_{1},\uparrow}
    \,c^\dagger_{\mathbf{K}+(\mathbf{k}_{2}\ominus\mathbf{q}),\downarrow}c_{\mathbf{K}+\mathbf{k}_{2},\downarrow}.
\end{align}

For example, in the scheme shown in \cref{fig:clustering_pi} $\mathbf{k}_j=n^j_1(0,\pi)+n^j_2(\pi,0)$ and $(n^j_1,n^j_2)\in\{(0,0),(0,1),(1,0),(1,1)\}$.

We now prove the following theorem:

\begin{theorem}
\label{thm:HK_is_Hubbard}
For any clustering scheme defined by Definition~\ref{def:clustering_scheme} using the wrapped condition, there exists a unitary transformation which maps the clustered Hamiltonian to a sum over decoupled cluster Hamiltonians $H_{\mathbf{K}}$,
\[
H_{\bm\Delta}=\sum_{\mathbf K\in\mathcal K} H_{\mathbf K},
\]
where each $H_{\mathbf{K}}$ is a finite size Hubbard Hamiltonian with \(N_c\) sites and of the form:
\[
H_{\mathbf K}
=
\sum_{a=1}^{z} t_a e^{i\mathbf K\cdot \mathbf r_a}
\sum_{\alpha,\beta,\sigma}
J^a_{\alpha\beta}\,
c^{\dagger,\mathbf K}_{\alpha,\sigma}c^{\mathbf K}_{\beta,\sigma}
+\mathrm{h.c.}
+
U\sum_{\alpha=1}^{N_c}
n^{\mathbf K}_{\alpha,\uparrow}n^{\mathbf K}_{\alpha,\downarrow},
\]
with \(J^a_{\alpha\beta}\) a potentially all-to-all hopping matrix determined by the clustering scheme.
\end{theorem}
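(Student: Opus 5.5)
The proof proceeds cluster by cluster. The truncated Hamiltonian already splits as $H_{\bm\Delta}=\sum_{\mathbf K}H_{\mathbf K}$, with each $H_{\mathbf K}$ involving only the $2N_c$ modes $c_{\mathbf K+\mathbf k(\mathbf n),\sigma}$, $\mathbf n\in I$. Operators for different $\mathbf K$ act on disjoint sets of modes. So it suffices to build, for each fixed $\mathbf K$, a single-particle unitary on the $N_c$-dimensional space spanned by these modes. The global unitary is then the direct sum over $\mathbf K$, lifted to Fock space; it is spin-independent.

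Because we use the wrap convention, $I=\mathbb Z_{M_1}\times\cdots\times\mathbb Z_{M_D}$ is a finite abelian group, and the natural unitary is its discrete Fourier transform. Introduce ``cluster sites'' $\alpha$ labelled by elements $\mathbf x_\alpha$ of the dual group $\hat I\cong I$, with $\mathbf n\cdot\mathbf x=\sum_d n_d x_d/M_d$, and define
\begin{equation}
c^{\mathbf K}_{\alpha,\sigma}=\frac{1}{\sqrt{N_c}}\sum_{\mathbf n\in I} e^{2\pi i\,\mathbf n\cdot \mathbf x_\alpha}\,c_{\mathbf K+\mathbf k(\mathbf n),\sigma}.
\end{equation}
Unitarity follows from the character orthogonality $\sum_{\mathbf n\in I}e^{2\pi i\mathbf n\cdot(\mathbf x-\mathbf x')}=N_c\delta_{\mathbf x\mathbf x'}$. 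The phase is well defined modulo $M_d$, and this is exactly what makes the wrapped sums $\oplus,\ominus$ compatible with the transform.

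Next, transform the interaction. Write $n^{\mathbf K}_{\alpha,\uparrow}n^{\mathbf K}_{\alpha,\downarrow}$ in terms of the momentum modes and sum over $\alpha$. The orthogonality relation over $\hat I$ enforces $\mathbf n_1'-\mathbf n_1+\mathbf n_2'-\mathbf n_2\equiv 0 \bmod (M_1,\dots,M_D)$. Parametrizing $\mathbf n_1'=\mathbf n_1\oplus\mathbf m$ and $\mathbf n_2'=\mathbf n_2\ominus\mathbf m$ gives
\begin{equation}
\sum_\alpha n^{\mathbf K}_{\alpha\uparrow}n^{\mathbf K}_{\alpha\downarrow}=\frac{1}{N_c}\sum_{\mathbf k_1,\mathbf k_2,\mathbf q\in\mathcal C_{\mathbf 0}}c^\dagger_{\mathbf K+(\mathbf k_1\oplus\mathbf q),\uparrow}c_{\mathbf K+\mathbf k_1,\uparrow}c^\dagger_{\mathbf K+(\mathbf k_2\ominus\mathbf q),\downarrow}c_{\mathbf K+\mathbf k_2,\downarrow},
\end{equation}
which is exactly the $U/N_c$ term in \cref{eq:cluster_truncated_interaction}. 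This step is where the $U/N\to U/N_c$ rescaling and the wrap convention are both used. One point needs checking: the map $\mathbf n\mapsto\mathbf k(\mathbf n)$ must be injective on $I$, so that distinct $\mathbf n$ label distinct modes. This holds because the clusters partition $\mathcal B$ and $|\mathcal C_{\mathbf K}|=N_c$.

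Finally, transform the kinetic term. Using $g(\mathbf K+\mathbf k(\mathbf n))=\sum_a t_a e^{i\mathbf K\cdot\mathbf r_a}e^{i\mathbf k(\mathbf n)\cdot\mathbf r_a}+\mathrm{h.c.}$ and inverting the transform gives
\begin{equation}
J^a_{\alpha\beta}=\frac{1}{N_c}\sum_{\mathbf n\in I}e^{i\mathbf k(\mathbf n)\cdot\mathbf r_a}\,e^{-2\pi i\,\mathbf n\cdot(\mathbf x_\alpha-\mathbf x_\beta)}.
\end{equation}
$J^a$ is a circulant matrix on the group $\hat I$, so it depends only on $\mathbf x_\alpha-\mathbf x_\beta$. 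For generic $\bm\Delta_d$, the phases $e^{i n_d\bm\Delta_d\cdot\mathbf r_a}$ are not characters of $\mathbb Z_{M_d}$: one has $M_d\bm\Delta_d\cdot\mathbf r_a\notin 2\pi\mathbb Z$ in general. Consequently $J^a$ is a geometric-sum kernel with all-to-all entries, which is precisely the ``potentially all-to-all'' hopping the theorem allows. In the maximal-separation case $M_d\bm\Delta_d$ is a reciprocal lattice vector, and $J^a$ collapses to a shift by $\mathbf r_a$ on the $N_c$-site torus. The factor $e^{i\mathbf K\cdot\mathbf r_a}$ then appears as a boundary twist, which matches the TABC remark.

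The main obstacle is the bookkeeping of the wrap convention. We must make sure the wrapped $\mathbf k_1\oplus\mathbf q$ used in the truncated interaction agrees with the modular constraint generated by Fourier orthogonality, so the identity holds exactly with no edge remainders. We must also place the non-character phases $e^{i\mathbf k(\mathbf n)\cdot\mathbf r_a}$ entirely in $J^a$ rather than in the interaction. The plan is to do everything in the abstract group $I$ and only afterwards substitute $\mathbf k(\mathbf n)$.
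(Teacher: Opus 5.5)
Your proposal is correct and follows essentially the same route as the paper. Both use a cluster-by-cluster discrete Fourier transform on the finite abelian group $I$, with character orthogonality localizing the wrapped interaction onto sites and the non-character phases $e^{i\mathbf k(\mathbf n)\cdot\mathbf r_a}$ absorbed into $J^a_{\alpha\beta}$. The only difference is direction: you expand $\sum_\alpha n^{\mathbf K}_{\alpha\uparrow}n^{\mathbf K}_{\alpha\downarrow}$ back into momentum modes, while the paper goes forward through the coarse cluster density operators $\rho^{(\mathbf K)}_\sigma(\mathbf q)$, which is the same orthogonality computation run the other way.

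One harmless slip: with your convention for $c^{\mathbf K}_{\alpha,\sigma}$, the kernel should carry $e^{+2\pi i\,\mathbf n\cdot(\mathbf x_\alpha-\mathbf x_\beta)}$. What you wrote is $J^a_{\beta\alpha}$.
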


We prove this by introducing the discrete Fourier transform on the finite
abelian group $I$ and acting with it on each term. This will allow us to give an explicit form for the hopping kernel $J^a_{\alpha\beta}$ in \cref{eq:hopping_kernel}. Our proof will proceed in two parts by focusing on the interaction term in \cref{sec:interaction_proof}, and the hopping term in \cref{sec:hopping_proof}.

\subsection{Interaction term}\label{sec:interaction_proof}

We start with the interaction term. The key observation is that the wrap
convention turns each cluster into the finite abelian group
$I=\prod_{d=1}^{D}\mathbb Z_{M_d}$. Since the Hubbard interaction strength $U$
is independent of momentum, the interaction depends only on this group
structure. The natural basis change is therefore the discrete Fourier transform
on $I$.

Let $\mathbf n_j=(n^j_1,\ldots,n^j_D)\in I$ label the relative cluster momenta
and $\mathbf m_\alpha=(m^\alpha_1,\ldots,m^\alpha_D)\in I$ label the sites of
the $N_c$-site auxiliary Hubbard model. We define
\begin{equation}
    \mathbf{k}_j=\mathbf{k}(\mathbf n_j)=\sum_{d=1}^{D}n^j_d\,\bm{\Delta}_{d},
\end{equation}
and the finite-model reciprocal and direct lattice vectors
\begin{equation}
\label{eq:def_k0}
\mathbf{k}^0_j=\sum_{d=1}^{D}\frac{2\pi n^j_d}{M_d a}\hat{\mathbf g}_d,
\qquad
\mathbf{R}^0_\alpha=a\sum_{d=1}^{D}m^\alpha_d\,\hat{\mathbf e}_d,
\end{equation}
where $\hat{\mathbf e}_d$ are the primitive lattice vectors of the original Bravais lattice, $\hat{\mathbf g}_d$ are the corresponding reciprocal primitive vectors, and $\hat{\mathbf g}_d\cdot\hat{\mathbf e}_{d'}=\delta_{dd'}$.

We then define the unitary transforms:
\begin{equation}
\label{eq:alpha_transform}
c_{\mathbf K+\mathbf k_j,\sigma}
=\frac{1}{\sqrt{N_c}}\sum_{\alpha=1}^{N_c} e^{+i\mathbf k^0_j\cdot\mathbf R^0_\alpha}\,c^{\mathbf K}_{\alpha,\sigma},
\end{equation}
\begin{equation}
c^{\mathbf K}_{\alpha,\sigma}
=\frac{1}{\sqrt{N_c}}\sum_{j=1}^{N_c} e^{-i\mathbf k^0_j\cdot\mathbf R^0_\alpha}\,c_{\mathbf K+\mathbf k_j,\sigma},
\end{equation}
where the orthogonality relation factorizes over the product group:
\begin{equation}
\frac{1}{N_c}\sum_{j=1}^{N_c}e^{i\mathbf k^0_j\cdot(\mathbf R^0_\alpha-\mathbf R^0_\beta)}
=
\prod_{d=1}^{D}
\left[
\frac{1}{M_d}\sum_{n_d=0}^{M_d-1}
e^{2\pi i n_d (m^\alpha_d-m^\beta_d)/M_d}
\right]
=\delta_{\alpha\beta}.
\end{equation}
Because both $\mathbf{k}_j$ and $\mathbf{k}^0_j$ are parameterized by the same
multi-index $\mathbf n_j$, wrapped addition in the interaction cluster is
mapped to addition of finite-model momenta modulo the corresponding reciprocal
lattice vectors. Since the interaction strength $U$ is independent of the momenta,
the cluster interaction depends only on this additive structure and not on the
numerical values of the momenta themselves.

Now let $\mathbf K$ index an interaction cluster, and let
$\mathcal C=\{\mathbf{k}(\mathbf n):\mathbf n\in I\}$ be the corresponding set
of relative momenta equipped with the wrapped addition inherited from $I$.
Then we can prove that the clustered Hubbard interaction,
\begin{equation}
\label{eq:truncated_int_term}
\begin{split}
H^{(\mathbf K)}_{\rm int}
&=\frac{U}{N_c}\sum_{\mathbf k_1,\mathbf k_2,\mathbf q\in\mathcal C}
c^\dagger_{\mathbf K+(\mathbf k_1\oplus\mathbf q),\uparrow}c_{\mathbf K+\mathbf k_1,\uparrow}\\
&\qquad\times\;c^\dagger_{\mathbf K+(\mathbf k_2\ominus\mathbf q),\downarrow}c_{\mathbf K+\mathbf k_2,\downarrow},
\end{split}
\end{equation}
becomes purely diagonal (``onsite'') in the transformed basis:
\begin{equation}
    \label{eq:alpha_basis_interaction}
H^{(\mathbf K)}_{\rm int}=U\sum_{\alpha=1}^{N_c} n^{\mathbf K}_{\alpha,\uparrow}n^{\mathbf K}_{\alpha,\downarrow}.
\end{equation}

\begin{proof}
Introduce a ``coarse cluster density operator" (CCDO):
\begin{equation}
\label{eq:coarse_cluster_density_operator}
\rho^{(\mathbf K)}_\sigma(\mathbf q)=\sum_{\mathbf k\in\mathcal C}
c^\dagger_{\mathbf K+(\mathbf k\oplus\mathbf q),\sigma}c_{\mathbf K+\mathbf k,\sigma}.
\end{equation}
Expressed in terms of these density operators the truncated cluster Hamiltonian \cref{eq:truncated_int_term} becomes:
\begin{equation}
H^{(\mathbf K)}_{\rm int}=\frac{U}{N_c}\sum_{\mathbf q\in\mathcal C}\rho^{(\mathbf K)}_\uparrow(\mathbf q)\rho^{(\mathbf K)}_\downarrow(-\mathbf q).
\end{equation}  
Using the unitary transform and closure of $\mathcal C$, if
$\mathbf q=\mathbf{k}(\mathbf m_q)$ then
$\mathbf q^0=\sum_{d=1}^{D}\tfrac{2\pi (m_q)_d}{M_d a}\hat{\mathbf g}_d$. Summing over $\mathbf k$ forces the density operator $\rho^{(\mathbf K)}_\sigma(\mathbf q)$ to be diagonal in $\alpha,\beta$:
\begin{align}
\label{eq:density_op}
\rho^{(\mathbf K)}_\sigma(\mathbf q)
&=\frac{1}{N_c}\sum_{\mathbf k\in\mathcal C}\sum_{\alpha,\beta}
e^{-i(\mathbf k^0+\mathbf q^0)\cdot\mathbf R^0_\alpha}
e^{+i\mathbf k^0\cdot\mathbf R^0_\beta}
c^{\dagger,\mathbf K}_{\alpha,\sigma}c^{\mathbf K}_{\beta,\sigma}
\nonumber\\
&=\sum_{\alpha,\beta}\left[
\frac{1}{N_c}\sum_{\mathbf k\in\mathcal C}
e^{i\mathbf k^0\cdot(\mathbf R^0_\beta-\mathbf R^0_\alpha)}
\right]
e^{-i\mathbf q^0\cdot\mathbf R^0_\alpha}
c^{\dagger,\mathbf K}_{\alpha,\sigma}c^{\mathbf K}_{\beta,\sigma}
\nonumber\\
&=\sum_\alpha e^{-i\mathbf q^0\cdot\mathbf R^0_\alpha}n^{\mathbf K}_{\alpha,\sigma},
\end{align}
where in the first line we used that $(\mathbf k\oplus\mathbf q)^0$ differs from $\mathbf k^0+\mathbf q^0$ only by a reciprocal lattice vector of the finite auxiliary lattice. Explicitly, the extra phase introduced by the wrap is trivial on $\mathbf R^0_\alpha$, since for any auxiliary reciprocal lattice vector $\mathbf G_{\rm aux}$, $e^{-i\mathbf G_{\rm aux}\cdot \mathbf R^0_\alpha}=1$. In the second line we used the discrete completeness relation
\[
\frac{1}{N_c}\sum_{\mathbf k\in\mathcal C}
e^{i\mathbf k^0\cdot(\mathbf R^0_\beta-\mathbf R^0_\alpha)}
=\delta_{\alpha\beta}.
\]
Thus the $\mathbf k$-sum makes $\rho^{(\mathbf K)}_\sigma(\mathbf q)$ diagonal in the $\alpha$ basis. The remaining sum over $\mathbf q$ in \cref{eq:diagonal_alpha_interaction} uses the same completeness relation once more, now to force the two site indices in the product of densities to coincide.
Explicitly, substituting the density operators \cref{eq:density_op} into the truncated cluster Hamiltonian \cref{eq:truncated_int_term}:
\begin{align}
\label{eq:diagonal_alpha_interaction}
H^{(\mathbf K)}_{\rm int}
&=\frac{U}{N_c}\sum_{\alpha,\beta}\Big(\sum_{\mathbf q\in\mathcal C}e^{i\mathbf q^0\cdot(\mathbf R^0_\beta-\mathbf R^0_\alpha)}\Big)
n^{\mathbf K}_{\alpha,\uparrow}n^{\mathbf K}_{\beta,\downarrow}
\nonumber \\
&=U\sum_\alpha n^{\mathbf K}_{\alpha,\uparrow}n^{\mathbf K}_{\alpha,\downarrow},
\end{align}
where the last equality uses discrete completeness $\sum_{\mathbf q\in\mathcal C}e^{i\mathbf q^0\cdot(\mathbf R^0_\beta-\mathbf R^0_\alpha)}=N_c\delta_{\alpha\beta}$.
\end{proof}

We note that this argument applies for any separation - in particular it does \emph{not} require maximal separation. It applies to any
wrapped clustering scheme for which the index set carries the finite abelian
group structure \(I=\prod_{d=1}^{D}\mathbb Z_{M_d}\); the only further
requirement is that $U$ is constant across the retained momentum channels. \footnote{%
Equivalently, the transform \cref{eq:alpha_transform} is the discrete Fourier
transform on the finite abelian group \(I\). The on-site Hubbard interaction is
the real-space form of a momentum-independent two-body coupling on any such
group, and hence the two are related by Fourier transform regardless of the
specific spacing vectors.}

\subsection{Real-space form of the cluster interaction.}
We can build a physical intuition for the truncated interaction \cref{eq:diagonal_alpha_interaction} by considering its real-space form in a particularly transparent limit. We consider ``maximal'' separations - those where the momentum-space separation is as large as it can be for a given cluster size. In Appendix~\ref{sec:app_minimal_sep_real_space}, we consider the opposite limit of the ``minimal'' separation case, which groups together nearest-neighbor sites in momentum space. 

To lighten notation, we consider the $1$D case. To extend to higher-dimensional Bravais lattices we can replace the scalar decomposition \(R=(XN_c+\alpha)a\) with the componentwise decomposition \(\mathbf R=\mathbf R_{\mathbf X}+\mathbf R^0_\alpha\), where \(\mathbf R_{\mathbf X}=a\sum_d X_d M_d\,\hat{\mathbf e}_d\) and \(\mathbf R^0_\alpha=a\sum_d m^\alpha_d\,\hat{\mathbf e}_d\). For maximal separations \(\bm{\Delta}^{\rm max}_d=\mathbf G_d/M_d\), the phase sums then factorize over \(d\).
The maximal separation case is defined by setting the separation to:
\begin{equation}
    \Delta_{\textrm{max}}=\frac{2\pi}{L}\frac{N}{N_c}=\frac{2\pi}{N_c a}
\end{equation}
In this case we can replace $k\oplus q$ with regular addition modulo the
\textit{lattice} reciprocal vector $G_{\rm lat}=\tfrac{2\pi}{a}$.

Intuitively, our clustering in reciprocal space coarse-grains the underlying Hubbard lattice at the length scales given by the separation vectors $\Delta$. As we go to larger cluster sizes, we are able to resolve more of the on-site Hubbard interaction.
To make this transparent we parameterize the microscopic lattice vectors by coordinates $X$ for the coarse lattice, and $\alpha$ for the sites within it, such that $R=(XN_c+\alpha)a$. We then transform \Cref{eq:truncated_int_term} into real space under this parameterization:

\begin{align}
    \label{eq:real_space_maximal_intermediate}
    H_{\mathrm{int}}&=
    \frac{U}{N_c N^2}
    \sum_{X_{1},...,X_{4}}\sum_{\alpha_1,...,\alpha_4}\sum_{K}\sum_{k_1,k_2,q\in\mathcal{C}_K}\nonumber\\
    &e^{i K(X_1+X_3-X_2-X_4)N_c a} 
    e^{i K(\alpha_1+\alpha_3-\alpha_2-\alpha_4)a} \nonumber\\
    &\times e^{i k_1(X_1-X_2)N_c a}
    e^{i k_2(X_3-X_4)N_c a}
    e^{i q(X_1-X_3)N_c a} \nonumber \\
    &\times e^{i k_1(\alpha_1-\alpha_2)a}
    e^{i k_2(\alpha_3-\alpha_4)a}
    e^{i q(\alpha_1-\alpha_3) a}\nonumber\\
    &\times c^\dagger_{X_1N_c+\alpha_1,\uparrow}c_{X_2N_c+\alpha_2,\uparrow}
    c^\dagger_{X_3N_c+\alpha_3,\downarrow}c_{X_4N_c+\alpha_4,\downarrow}.
\end{align}
We note in this expression that we were able to pull apart $q$ from $k_1$ and
$k_2$ only because we are in the maximal separation scheme so that
$k_1\oplus q=k_1+q-G_{\rm lat}$ when wrap occurs.

We then perform the sums in turn. For the inner sum over $k_1,k_2,q$ we note that:
\begin{equation}
    k_m (X N_c a)=\frac{2\pi}{N_c} m X N_c a
    =2\pi m X,
\end{equation}
and hence these exponentials are trivial. The remaining terms involving $\alpha$ are exactly reciprocal to the cluster momenta $k_1,k_2,q$ and so:
\begin{align}
    &\sum_{\alpha_1,...,\alpha_4}\sum_{k_1,k_2,q\in\mathcal{C}_K}
    e^{i k_1(\alpha_1-\alpha_2)a}
    e^{i k_2(\alpha_3-\alpha_4)a}
    e^{i q(\alpha_1-\alpha_3) a} \nonumber\\
    \label{eq:alpha_deltas}
    &=N_c^3\sum_{\alpha_1,...,\alpha_4}
    \delta_{\alpha_1,\alpha_2}\delta_{\alpha_3,\alpha_4}\delta_{\alpha_1,\alpha_3},
\end{align}
and so only the operators diagonal in $\alpha$ remain.

To perform the outer sum over $K$ in \cref{eq:real_space_maximal_intermediate}, we note that in the maximal separation scheme, the set of cluster representatives $K$ is given by the set of 
$N_X=N/N_c$ smallest crystal momenta,  $K\in\{0,\tfrac{2\pi}{L},2\tfrac{2\pi}{L},...,(N_X-1)\tfrac{2\pi}{L} \}$. Denoting $X_1+X_3-X_2-X_4=\Delta X$, we write the sum over $K$ as:
\begin{align}
    \sum_{m=0}^{N_X-1}e^{i \tfrac{2\pi}{L}m\Delta X N_c a}
    =\sum_{m=0}^{N_X-1}e^{i \tfrac{2\pi}{N_X}m\Delta X}=N_X\delta_{\Delta X, 0 \mod N_X},
\end{align}
where the last equality follows from the fact that the coarse lattice coordinate $X$ and the cluster representative $m$ index a real and reciprocal lattice respectively of size $N_X$, and are hence dual to each other. Since \cref{eq:alpha_deltas} enforces $\alpha_1=\alpha_2=\alpha_3=\alpha_4$, we have
\[
R_1+R_3-R_2-R_4
=
N_c a\,(X_1+X_3-X_2-X_4)
=
N_c a\,\Delta X.
\]
Hence the condition $\Delta X=0 \pmod{N_X}$ is equivalent to
\[
R_1+R_3-R_2-R_4=0 \pmod{L},
\qquad
L=N_XN_c a.
\]

The resulting real space interaction is then:
\begin{align}
    \label{eq:real_space_maximal_int}
    &H_{\mathrm{int}}=\frac{U}{N_X}\sum_{X_1,...,X_4}
    \sum_{\alpha}\delta_{X_1+X_3,X_2+X_4}
    c^\dagger_{X_1N_c+\alpha,\uparrow}c_{X_2N_c+\alpha,\uparrow} \nonumber\\
    &\times c^\dagger_{X_3N_c+\alpha,\downarrow}c_{X_4N_c+\alpha,\downarrow},
\end{align}
where we have rewritten the coarse-lattice constraint \(\delta_{\Delta X,\,0\mod N_X}\), using the relation above, as conservation of the physical center of mass modulo the full system size \(L\).

Eq.~\eqref{eq:real_space_maximal_int} has a direct physical interpretation. The interaction partitions the underlying lattice into a coarse "super-lattice" with sites separated by $N_c a$, and the sites within the superlattice unit cell indexed by $\alpha$. \textit{Within} a cluster, the interaction is the regular local Hubbard interaction. \textit{Between} clusters, the interaction is HK. As we increase the number of sites in a cluster $N_c$, we thus interpolate between the HK and Hubbard interactions.

\subsection{Hopping term}\label{sec:hopping_proof}
We now move on to evaluate the hopping term under the transformation \cref{eq:alpha_transform}. Rewriting the set of $z$ hoppings defined by \cref{eq:H_original_momentum} in terms of our clustering scheme yields
\begin{equation}
    T_{\mathbf K}
    =
    \sum_{j=1}^{N_c}\sum_{\sigma}
    g(\mathbf K+\mathbf k_j)\,
    c^\dagger_{\mathbf K+\mathbf k_j,\sigma}
    c_{\mathbf K+\mathbf k_j,\sigma}.
\end{equation}
Inserting \cref{eq:alpha_transform} gives
\begin{equation}
    T_{\mathbf K}
    =
    \frac{1}{N_c}\sum_{j=1}^{N_c}\sum_{\alpha,\beta=1}^{N_c}\sum_{\sigma}
    g(\mathbf K+\mathbf k_j)\,
    e^{i\mathbf k^0_j\cdot(\mathbf R^0_\alpha-\mathbf R^0_\beta)}
    c^{\dagger,\mathbf K}_{\alpha,\sigma}c^{\mathbf K}_{\beta,\sigma}.
\end{equation}
Using $g(\mathbf k)=\sum_{a=1}^{z}t_a e^{i\mathbf k\cdot\mathbf r_a}+\mathrm{h.c.}$,
we obtain
\begin{align}
    T_{\mathbf K}
    &=
    \sum_{a=1}^{z}t_a e^{i\mathbf K\cdot\mathbf r_a}
    \sum_{\alpha,\beta=1}^{N_c}\sum_{\sigma}
    J^a_{\alpha\beta}\,
    c^{\dagger,\mathbf K}_{\alpha,\sigma}c^{\mathbf K}_{\beta,\sigma}
    +\mathrm{h.c.},
\end{align}
where the hopping matrix elements $J^a_{\alpha\beta}$ are defined for each hopping vector $\mathbf r_a$ as follows. First, we define the projection of $\mathbf{r}_a$ onto the clustering vectors
\begin{equation}
    \eta_d^a:=\frac{M_d}{2\pi}\,\bm{\Delta}_{d}\cdot\mathbf r_a,
    \qquad
    \boldsymbol\eta^a:=a\sum_{d=1}^{D}\eta_d^a\hat{\mathbf e}_d.
\end{equation}
Then
\begin{equation}
    e^{i\mathbf k_j\cdot \mathbf r_a}
    =
    \exp\!\left(i\sum_{d=1}^{D}\frac{2\pi n_d^j}{M_d}\eta_d^a\right)
    =
    e^{i\mathbf k^0_j\cdot\boldsymbol\eta_a},
\end{equation}
and so
\begin{align}
    \label{eq:hopping_kernel}
    J^a_{\alpha\beta}
    &=
    \frac{1}{N_c}\sum_{j=1}^{N_c}
    e^{i\mathbf k_j^0\cdot(\mathbf R^0_\alpha-\mathbf R^0_\beta+\boldsymbol\eta_a)}.
\end{align}
Combining Eq.~\eqref{eq:hopping_kernel} with the results of Sec.~\ref{sec:interaction_proof} yields for the full clustered Hamiltonian
\begin{align}
    \label{eq:full_cluster_ham}
    H=\sum_{\mathbf K\in\mathcal K}&\left[
    \sum_{a=1}^{z}t_a e^{i\mathbf{K}\cdot\mathbf{r}_a}
    \sum_{\alpha,\beta=1}^{N_c}\sum_{\sigma}
    J^{a}_{\alpha\beta}c^{\dagger,\mathbf K}_{\alpha,\sigma}c^{\mathbf K}_{\beta,\sigma}
    +\mathrm{h.c.}\right.\nonumber \\
    &\left.+U\sum_{\alpha=1}^{N_c}n^{\mathbf K}_{\alpha,\uparrow}n^{\mathbf K}_{\alpha,\downarrow}
    \right].
\end{align}
This proves \cref{thm:HK_is_Hubbard}.
The term in square brackets is the cluster Hamiltonian $H_\mathbf{K}$, and is the Hamiltonian for a finite-site Hubbard model with $N_c$ sites with hopping determined by $J^a_{\alpha\beta}$.
Each hopping in the original Hubbard model $t_a$ in general results in a hopping between all pairs of sites $J^a_{\alpha\beta}$. Each hopping also acquires a phase factor $e^{i\mathbf{K}\cdot\mathbf{r}_a}$ determined by the reference cluster momentum $\mathbf{K}$ and the hopping distance $\mathbf{r}_a$. 
\subsection{Momentum mixing HK $=$ finite-site Hubbard with twist averaging}
\label{sec:philip_scheme}

We now consider the special case in which the cluster spacing is maximal along
each generator direction:
\begin{equation}
    \mathbf G_d:=\frac{2\pi}{a}\hat{\mathbf g}_d,
    \qquad
    \bm{\Delta}^{\rm max}_{d}
    =
    \frac{1}{M_d}\mathbf G_d
    =
    \frac{2\pi}{M_d a}\hat{\mathbf g}_d.
\end{equation}
We prove that in this case the Hamiltonian \cref{eq:full_cluster_ham}
recovers the ``momentum-mixing'' HK model. For example, in the case considered
in~\cite{Mai2025momentummixings} of a four-site clustering $N_c=4$ on a square
lattice, one has $M_x=M_y=2$ and hence the maximal separations are
$\Delta_{1}=(\pi/a,0)$ and $\Delta_2=(0,\pi/a)$. The corresponding relative
momenta are $\mathbf{k}_1=(0,0)$, $\mathbf{k}_2=\mathbf{\Delta}_1$,
$\mathbf{k}_3=\mathbf{\Delta}_2$, and
$\mathbf{k}_4=\mathbf{\Delta}_1+\mathbf{\Delta}_2$, with the transformed
operators given by inserting these values into the inverse of
\cref{eq:alpha_transform}:
\begin{align}
    c^{\dagger,\mathbf{K}}_{\alpha=1}
    &=\frac{1}{2}\left[c^\dagger_{\mathbf{K}}
    +c^\dagger_{\mathbf{K}+\mathbf{k}_2}+c^\dagger_{\mathbf{K}+\mathbf{k}_3}+c^\dagger_{\mathbf{K}+\mathbf{k}_4}\right],\\
    c^{\dagger,\mathbf{K}}_{\alpha=2}
    &=\frac{1}{2}\left[c^\dagger_{\mathbf{K}}
    -c^\dagger_{\mathbf{K}+\mathbf{k}_2}+c^\dagger_{\mathbf{K}+\mathbf{k}_3}-c^\dagger_{\mathbf{K}+\mathbf{k}_4}\right],\\
    c^{\dagger,\mathbf{K}}_{\alpha=3}
    &=\frac{1}{2}\left[c^\dagger_{\mathbf{K}}
    +c^\dagger_{\mathbf{K}+\mathbf{k}_2}-c^\dagger_{\mathbf{K}+\mathbf{k}_3}-c^\dagger_{\mathbf{K}+\mathbf{k}_4}\right],\\
    c^{\dagger,\mathbf{K}}_{\alpha=4}
    &=\frac{1}{2}\left[c^\dagger_{\mathbf{K}}
    -c^\dagger_{\mathbf{K}+\mathbf{k}_2}+c^\dagger_{\mathbf{K}+\mathbf{k}_3}+c^\dagger_{\mathbf{K}+\mathbf{k}_4}\right].
\end{align}
which correspond to those given in the supplement of~\cite{mmhk_arxiv}. 
In general, for maximal separation the relative momentum $k_j$ is given by:
\begin{equation}
    \mathbf{k}_j=\sum_{d=1}^D n^j_{d}\bm{\Delta}^{\rm max}_{d}
    =\sum_{d=1}^D n^j_{d}\tfrac{2\pi}{M_d a}\hat{\mathbf g}_d
    =\mathbf{k}^0_j,
\end{equation}
where the last equality is just the definition of $\mathbf{k}^0_j$ in \cref{eq:def_k0}. This allows us to simplify the cluster hopping matrix $J_{\alpha\beta}$ considerably:
\begin{equation}
    J^a_{\alpha\beta}
    =
    \frac{1}{N_c}\sum_{j=1}^{N_c}
    e^{i\mathbf{k}^0_j\cdot(\mathbf{R}^0_{\alpha}-\mathbf{R}^0_{\beta}+\mathbf{r}_a)}
    =
    \delta_{\mathbf m_\alpha,\mathbf m_\beta+\mathbf s_a},
\end{equation}
where $\mathbf r_a=a\sum_{d=1}^{D}s_{a,d}\hat{\mathbf e}_d$ and the addition of
multi-indices is understood modulo $I$.
The Hamiltonian \cref{eq:full_cluster_ham} hence becomes
\begin{align}
    \label{eq:mmhk_ham}
 H=\sum_{\mathbf K\in\mathcal K}&\left[
    \sum_{a=1}^{z}t_a e^{i\mathbf{K}\cdot\mathbf{r}_a}
\sum_{\alpha=1}^{N_c}\sum_{\sigma}\left(
    c^{\dagger,\mathbf K}_{\alpha,\sigma}c^{\mathbf K}_{\alpha+\mathbf s_a,\sigma}
+\mathrm{h.c.}\right)\right.\nonumber \\
&\left.+U\sum_{\alpha=1}^{N_c}n^{\mathbf K}_{\alpha,\uparrow}n^{\mathbf K}_{\alpha,\downarrow}
    \right],
\end{align}
Here $\alpha+\mathbf s_a$ denotes the site whose multi-index is
$\mathbf m_\alpha+\mathbf s_a$ modulo $I$. The cluster Hamiltonian
$H_{\mathbf{K}}$ in square brackets is the same Hamiltonian as the original Hubbard
Hamiltonian \cref{eq:original_full_hubbard}, but for $N_c$ rather than $L$
sites and with a phase factor $e^{i\mathbf{K}\cdot\mathbf{r}_a}$ multiplying each hopping $t^a$. This is also the Hamiltonian of the Momentum-Mixing HK model. We provide an explicit mapping to the case considered in~\cite{Mai2025momentummixings} in \cref{sec:mmhk_notation_comparison} to ease comparison.
This provides an alternative derivation of the correspondence first noted in Ref.~\cite{bai2025proofmomentummixinghatsugai}, which established the equivalence between the MMHK model and the twist-averaged Hubbard model in the thermodynamic limit. Here we show that the mapping is exact for any finite cluster size $N_c$.

We can also better understand this Hamiltonian by noting that equation~\cref{eq:full_cluster_ham} already shows that the cluster label $\mathbf K$ appears as a Peierls phase in the hopping. For a general clustering scheme this gives a finite-site Hubbard model with generalized all-to-all hoppings $J^a_{\alpha\beta}$. In the maximal-separation case, these hoppings reduce to the ordinary finite-lattice ones, and for $\mathbf r_a=a\sum_{d=1}^{D}s_{a,d}\hat{\mathbf e}_d$ the phase may be written as
\[
e^{i\mathbf K\cdot\mathbf r_a}
=
e^{i\sum_{d=1}^{D}\theta_{\mathbf K,d}\,s_{a,d}},
\qquad
\theta_{\mathbf K,d}:=a\,\mathbf K\cdot\hat{\mathbf e}_d.
\]
Thus, in our coordinate conventions, the corresponding twist vector is $\boldsymbol\theta_{\mathbf K}=a\mathbf K$.

This establishes that the maximal spacing HK construction with $N_c$ sites is unitarily equivalent to a twist-averaged  $N_c$-site Hubbard Hamiltonian with the twist angles $\{\boldsymbol\theta_{\mathbf K}\}=\{a\mathbf K\}$ corresponding to the chosen cluster representatives $\mathbf K$. We emphasize that this equivalence holds for any finite cluster size, and does not require the thermodynamic limit. This equivalence has two important consequences.

First, this provides an alternative derivation of conventional twist averaging from a momentum-space perspective. Surprisingly, this derivation shows that twist-averaging is equivalent to truncating the momentum modes in the interacting term. This allows us to interpret the generalized HK models that come from our construction within the twist-averaging framework. Since the maximal-separation scheme reproduces the standard twist-averaged boundary-condition construction, the non-maximal schemes can be viewed as a generalization of twist averaging.

For any wrapped clustering, the cluster label $\mathbf K$ enters as a Peierls phase, so the Hamiltonian decomposes into a sum of twist sectors labelled by $\mathbf K$. Maximal separation is special because the relative cluster momenta of the twist sector $\mathbf{k}$ coincide with the crystal momenta of the original lattice. At maximal separation, a physical hopping by lattice vector $\mathbf r_a=a\sum_d s_{a,d}\hat{\mathbf e}_d$ maps under the transformation \cref{eq:alpha_transform} to an integer translation by $\mathbf s_a$ on the auxiliary $N_c$-site lattice, so the kinetic term remains local. For non-maximal separations the cluster momenta no longer coincide with the auxiliary lattice momenta. The discrete Fourier transform on the cluster group still localizes the interaction, but the microscopic hoppings now correspond to \emph{fractional} translations on the auxiliary cluster. Because a fractional translation is not local on the lattice, these hoppings do not localize: they instead appear as the long-ranged all-to-all kernel $J^a_{\alpha\beta}$.

We also note an important computational advantage of the transform \cref{eq:alpha_transform}. For each interaction cluster, it replaces the \(O(N_c^3)\) momentum-space couplings in \cref{eq:cluster_truncated_interaction} by the \(N_c\) onsite Hubbard terms in \cref{eq:alpha_basis_interaction}, at the cost of introducing a dense \(N_c\times N_c\) one-body hopping matrix. In other words, the transform trades a complicated two-body interaction for a local interaction plus a more complicated kinetic term, which is simpler to handle.

Second, we note that this equivalence clarifies what is at first sight a puzzling observation about HK models.  Given that HK models are manifestly nonlocal, it is at first sight puzzling how they are able to recover the strongly \emph{localized} physics of the Hubbard model. In particular, HK models have been shown to exhibit the metal-insulator transition~\cite{TheOGHK1992}, dynamic spectral weight transfer~\cite{tenkila_dynamical_spectral_weight_2025}, and diverging self-energies~\cite{lsm2023,setty2024symmetry}. The equivalence in Eq.~\eqref{eq:mmhk_ham} reconciles this tension in a straightforward way: the observed correspondence between (non-local) HK and (local) Hubbard models arises simply because HK model are equivalent to twist-averaged, finite-site Hubbard models. Moreover, the previously-mentioned phenomena for which the HK model is successful are among those that can be captured in exact diagonalization studies of small Hubbard clusters~\cite{dagotto1994correlated}.

\subsection{Alternative clustering schemes}

This equivalence of the cluster truncated Hamiltonian with the original finite-site Hubbard model only holds for the case in which the interaction clustering spacings $\{\mathbf{\Delta_d}\}$ are maximal. Our general construction, however, allows us to retain arbitrary modes. In the general case, the hopping $\mathbf{r}_{a}$\ leads to a fractional contribution to the hopping matrix $J_{\alpha\beta}$ in \cref{eq:hopping_kernel}. In this case, in general all couplings are non-zero and there is no choice of twist angles for which the ensemble in \cref{eq:full_cluster_ham} is equivalent to a single cluster Hamiltonian with twist averaging.

\begin{figure*}[tbp]
    \centering
    \includegraphics[width=0.98\linewidth]{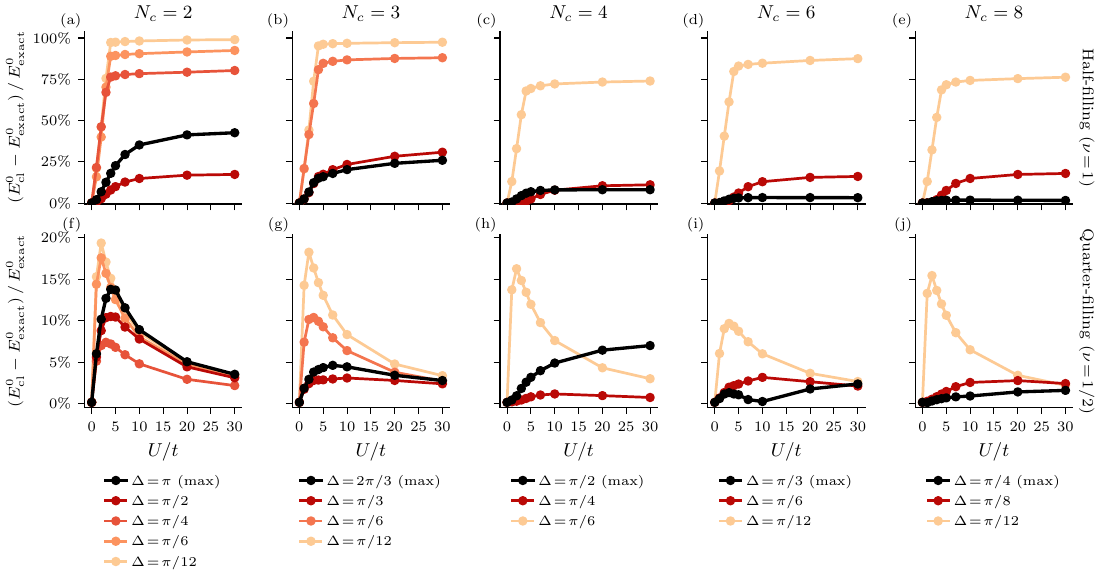}
    \caption{Relative error in the ground state energy per site of the
    one-dimensional Hubbard model ($\lambda=0$) for different interaction
    clustering schemes, benchmarked against the exact Bethe ansatz solution for a finite, periodic chain of $L=48$ sites.
    Each column corresponds to a fixed interaction cluster size $N_c$
    (from $N_c=2$ to $N_c=8$); the top row shows half-filling ($n=1$)
    and the bottom row quarter-filling ($n=0.5$). Within each panel,
    different curves correspond to different cluster spacings $\Delta$:
    the maximal separation (momentum-mixing HK, black) is compared
    against non-maximal alternatives. At half-filling the different
    schemes perform comparably, but at quarter-filling non-maximal
    spacings can dramatically outperform the maximal scheme---for
    example, at $N_c=4$ the $\Delta=\pi/4$ scheme recovers the exact
    energy to within numerical precision across most $U/t$ values,
    while the maximal $\Delta=\pi/2$ scheme incurs $5$--$8\%$ relative
    error. Convergence is also not monotonic in $N_c$: at
    quarter-filling the $N_c=3$ scheme outperforms $N_c=4$.}
    \label{fig:scheme_comparison_v0}
\end{figure*}

Physically, non-maximal clusterings can be interpreted as selecting an \emph{arbitrary}, but finite set of momentum-transfer channels in the interaction: by construction we keep only terms for which $(\mathbf{k_1},\mathbf{k}_2,\mathbf{k}_1+\mathbf{q},\mathbf{k}_2-\mathbf{q})$ remain within the subgroup generated by $\{\mathbf{\Delta}_d\}$. 
This mirrors the organizing principle of moir\'e systems. A moir\'e superlattice introduces long-wavelength Fourier components at small reciprocal vectors $\{\mathbf G_{{\rm M},\ell}\}$, which strongly hybridize electronic states with $\mathbf{k}$ and $\mathbf{k}+\mathbf G_{{\rm M},\ell}$ and produce an emergent mini-Brillouin zone.
In that setting, the dominant low-energy scattering processes are those with momentum transfers built from the moir\'e vectors.
Our non-maximal schemes allow one to impose this structure directly at the level of the interaction: choosing $\{\mathbf{\Delta}_d\}$ to match the physically relevant transfers (e.g. $2k_F$ in 1D Peierls physics, or moir\'e reciprocal vectors in 2D) yields reduced Hubbard clusters tailored to the emergent supercell rather than to the original microscopic lattice.

\subsection{Numerical comparison of different clustering schemes to the Hubbard model}
We now present a numerical comparison of how different clustering schemes \cref{eq:full_cluster_ham} compare to the original Hubbard model \cref{eq:original_full_hubbard} in one dimension.
In \cref{fig:scheme_comparison_v0} we show the ground state energy at half- (top row) and quarter-filling (bottom row) for each clustering scheme, along with the ground state energy of the Hubbard model obtained through the Density Matrix Renormalization Group (DMRG) for a one-dimensional chain much longer than the cluster size.

For the purposes of this comparison, it is important to match both the boundary conditions and the finite system size of the reference calculation to those of the clustering scheme. At $\lambda=0$, where the underlying model reduces to the one-dimensional Hubbard chain, we therefore implement the finite-size periodic Bethe-ansatz/Takahashi solution on an $L=48$ ring as the reference~\cite{liebwu1968hubbard,essler2005hubbard,essler2013shellfilling}, which we verify recovers the exact diagonalization ground state energies up to machine precision for Hubbard rings up to size 12. The full code for all figures is available at: https://github.com/chainik1125/decomposing-hubbard .
 
In each case, we see that there is a generalized scheme which outperforms (in terms of relative deviation from the DMRG ground state energy per site) the maximal separation scheme for a wide range of $U$ values, especially at quarter-filling. 
At quarter-filling we can see in the bottom row of \cref{fig:scheme_comparison_v0} that non-maximal schemes can outperform across a range of $U$ values. The improvement is particularly striking at quarter-filling for $N_c=4$. Across most $U$-values, the non-maximal scheme at $\Delta=\pi/4$ separation recovers the DMRG energy up to numerical error, whereas the maximal scheme $\Delta=\pi/2$ has a relative error between $5$-$8\%$ higher.

We also note that the convergence to DMRG is generally not monotonic in the cluster size, especially at quarter-filling. Thus we see that the $N_c=3$ scheme in \cref{fig:scheme_comparison_v0} (e) significantly outperforms the $N_c=4$ scheme (f).

In Appendix~\ref{sec:app_numerical_results} we provide additional data on the comparison of the \textit{fillings} obtained from DMRG to the different clustering schemes considered here as we vary the chemical potential $\mu_0$ and the interaction $U$. The resulting comparison is largely consistent with what we see for the ground state energy; across most parameter ranges there is a non-maximal scheme which outperforms the maximal one.

\section{The Aubry-Andr\'e Hubbard and Aubry-Andr\'e HK models}

The general construction outlined in \cref{sec:general_construction} provides a way to retain specific momentum channels in the interaction. 
This scheme becomes particularly important for understanding the low-temperature behavior of systems where certain momentum couplings dominate the low-energy physics.
This situation occurs in moir\'e-like systems, where a spatially varying supercell potential couples nearby momentum modes to create an emergent supercell.

In the rest of this work, we study the simplest system which retains this physics, namely the one-dimensional interacting Aubry-Andr\'e model~\cite{harper1955single,aubry1980analyticity,iyer2013mblquasiperiodic}. The Hamiltonian for the Aubry-Andr\'e Hubbard (AAH) model is given by
\begin{align}
    \label{eq:real_space_aah}
    H&=t\sum_{i=1}^{N}\sum_{\sigma}\left(c^\dagger_{i+1,\sigma}c_{i,\sigma}+h.c.\right) \nonumber \\ 
    &+\lambda\sum_{i=1}^{N}\sum_{\sigma}\cos(2\pi\beta i+\phi)n_{i,\sigma}\nonumber\\
    &+U\sum_{i=1}^Nn_{i,\uparrow}n_{i,\downarrow},
\end{align}
where $\lambda$ is the strength of the on-site modulation with wavevector $2\pi\beta$ and phase offset $\phi$.
We study the model for a large, but finite, system size $L$ with periodic boundary conditions.
In the finite size setting, we note that the self-duality of the non-interacting model obtains only when the modulation frequency is $\beta=m/L$ with $m$ coprime to $L$. 
If this construction is extended to the thermodynamic limit, then the self-duality holds when $\beta$ is irrational~\cite{iyer2013mblquasiperiodic}.

In what follows, we will establish four facts about this model. First, we show that the finite commensurate approximants of the Aubry-Andr\'e-Hubbard model are dual to the Aubry-Andr\'e model with HK interactions.
Second, we show that the Aubry-Andr\'e potential can be incorporated into our generalized scheme, and is numerically tractable in the \textit{commensurate} case. 
Third, we show that for values of the Aubry-Andr\'e potential $\lambda>U/2$ a finite clustering scheme recovers the ground state energy of finite DMRG simulations of the full commensurate AAH model to $<1\%$ accuracy. 
This model therefore provides a regime in which the approximation procedure described here converges to the low-temperature physics of the thermodynamic limit even outside the unattainable limit when the interaction cluster size approaches the system size.
Finally, we show numerically that, surprisingly, non-maximal separation schemes can be competitive with maximal separation schemes even when the maximal scheme retains more momentum modes.

\subsection{AAH is dual to AAHK}
\label{sec:aah_dual_aahk}

The essential feature of the \textit{non}-interacting Aubry-Andr\'e model is that, when $\beta$ is incommensurate with the underlying lattice, the model is self-dual~\cite{aubry1980analyticity}. The same procedure can be used to prove that in the \textit{interacting} case the AAH model is dual to the AAHK model, as we will now show.

We start from the finite commensurate AAHK model with $\beta=m/L$ and $\gcd(m,L)=1$, which in position space is defined as
\begin{align}
    H_{AAHK}
    &=
    t\sum_{j,\sigma}\left(c^\dagger_{j+1,\sigma}c_{j,\sigma}+h.c.\right)
    +\lambda\sum_{j,\sigma}\cos\!\left(\frac{2\pi mj}{L}\right)n_{j,\sigma}
    \nonumber\\
    &\quad
    +\frac{U}{L}
    \sum_{R_1,\ldots,R_4}
    \delta^{(L)}_{R_1+R_3,\;R_2+R_4}\,
    c^\dagger_{R_1,\uparrow}c_{R_2,\uparrow}
    c^\dagger_{R_3,\downarrow}c_{R_4,\downarrow}.
\end{align}
To proceed, we will apply a twisted (by the modulation $m$) Fourier transform
\begin{equation}
    c_{j,\sigma}
    =
    \frac{1}{\sqrt{L}}
    \sum_{\ell=0}^{L-1}
    e^{i 2\pi m j \ell/L}c_{\ell,\sigma},
\end{equation}
which is unitary because $\gcd(m,L)=1$. Here $\ell\in\mathbb Z_L$ labels sites of the dual lattice.
This converts the hopping term into an onsite term and vice versa:
\begin{align}
    t\sum_{j,\sigma}\left(c^\dagger_{j+1,\sigma}c_{j,\sigma}+h.c.\right)
    &=
    2t\sum_{\ell,\sigma}\cos\!\left(\frac{2\pi m\ell}{L}\right)
    c^\dagger_{\ell,\sigma}c_{\ell,\sigma},\label{eq:aahk_hopping_transform}\\
    \lambda\sum_{j,\sigma}\cos\!\left(\frac{2\pi mj}{L}\right)n_{j,\sigma}
    &=
    \frac{\lambda}{2}
    \sum_{\ell,\sigma}
    \left(c^\dagger_{\ell+1,\sigma}c_{\ell,\sigma}+h.c.\right)\label{eq:aahk_onsite_transform}.
\end{align}
These are precisely the self-duality relations for the finite-size, non-interacting Aubry-Andr\'e model~\cite{aubry1980analyticity}. Applying the same transform to the HK interaction gives
\begin{align}
    &\frac{U}{L}
    \sum_{R_1,\ldots,R_4}
    \delta^{(L)}_{R_1+R_3,\;R_2+R_4}
    c^\dagger_{R_1,\uparrow}c_{R_2,\uparrow}
    c^\dagger_{R_3,\downarrow}c_{R_4,\downarrow}
    \nonumber\\
    &=
    \frac{U}{L^3}
    \sum_{R_2,R_3,R_4}
    \sum_{\ell_1,\ldots,\ell_4}
    e^{\frac{i2\pi m}{L}
    \left[
    (-\ell_1+\ell_2)R_2
    +(\ell_1-\ell_3)R_3
    +(-\ell_1+\ell_4)R_4
    \right]}
    \nonumber\\
    &\qquad\times
    c^\dagger_{\ell_1,\uparrow}c_{\ell_2,\uparrow}
    c^\dagger_{\ell_3,\downarrow}c_{\ell_4,\downarrow}
    \nonumber\\
    &=
    U\sum_{\ell_1,\ldots,\ell_4}
    \delta^{(L)}_{\ell_1,\ell_2}
    \delta^{(L)}_{\ell_1,\ell_3}
    \delta^{(L)}_{\ell_1,\ell_4}
    c^\dagger_{\ell_1,\uparrow}c_{\ell_2,\uparrow}
    c^\dagger_{\ell_3,\downarrow}c_{\ell_4,\downarrow}
    \nonumber\\
    &=
    U\sum_{\ell}
    n_{\ell,\uparrow}n_{\ell,\downarrow},\label{eq:aahk_interaction_transform}
\end{align}
where the geometric sums impose equality of the dual-lattice indices modulo $L$, since multiplication by $m$ is a permutation of $\mathbb Z_L$ when $\gcd(m,L)=1$.

Combining Eqs.~\eqref{eq:aahk_hopping_transform}--\eqref{eq:aahk_interaction_transform}, we find that the AAHK Hamiltonian can be written in the dual basis as
\begin{align}
    H_{AAHK}
    &=
    \sum_{\ell,\sigma}
    2t\cos\!\left(\frac{2\pi m\ell}{L}\right)
    c^\dagger_{\ell,\sigma}c_{\ell,\sigma}
    \nonumber\\
    &\quad
    +\frac{\lambda}{2}
    \sum_{\ell,\sigma}
    \left(c^\dagger_{\ell+1,\sigma}c_{\ell,\sigma}+h.c.\right)
    +U\sum_{\ell}n_{\ell,\uparrow}n_{\ell,\downarrow},
\end{align}
which is exactly the Aubry-Andr\'e Hubbard Hamiltonian introduced in Eq.~\eqref{eq:real_space_aah}, expressed in the dual-lattice basis, with $t$ and $\lambda/2$ exchanged.
The AAHK model is therefore unitarily equivalent to the AAH model for the finite commensurate approximants $\beta=m/L$ with $\gcd(m,L)=1$.
Thus solving the AAHK model in the localized regime is equivalent to solving the dual AAH model in the delocalized regime, and vice versa.

\subsection{Incorporating on-site modulation into the clustering scheme}
\label{sec:aah_term_in_scheme}
We will now investigate how our clustering scheme approximates the physics of the general commensurate AAH model. We consider an on-site modulation term $\hat{V}$ given in momentum space by:
\begin{equation}
    \hat{V}=\lambda/2\sum_{k}c^\dagger_{k}c_{k+Q}+h.c. , \qquad Q=2\pi\beta.
\end{equation}
For a finite commensurate chain, we parameterize the interaction separation by an integer step \(s\) on the reciprocal lattice, \(\Delta=\frac{2\pi}{L}s\), and the Aubry-Andr\'e momentum transfer by an integer step \(r\), \(Q=\frac{2\pi}{L}r\). We set $\phi=0$ without loss of generality, and omit spin indices for clarity.

We note here that, at finite $L$, the duality is established for the commensurate approximants $\beta=m/L$ with $\gcd(m,L)=1$, for which the twisted Fourier transform is unitary.
The usual irrational self-duality is then recovered in the thermodynamic limit.
In this finite commensurate setting, the twisted Fourier transform is a discrete unitary transform of the same type as \cref{eq:alpha_transform}.

\subsubsection{Consistent clustering conditions for the AAH model}

Because both the on-site modulation and the interaction couple different momenta in the AAH model, care must be taken to efficiently approximate the interaction in the clustering scheme. The on-site potential $\hat{V}$ can couple distinct interaction clusters into \emph{superclusters}. To see this, note that
repeated applications of the coupling implied by $\Delta$ and $\hat{V}$ generate the orbit $\{\ell+a s+b r\ \mathrm{mod}\ L\}$. 
This orbit has size $L/\gcd(L,s,r)$.
Therefore each supercluster contains
\begin{equation}
    \label{eq:supercluster_size}
    N_{SC}=\frac{L}{\gcd(L,s,r)}
\end{equation}
distinct momentum points. The interaction clusters alone have size $L/\gcd(L,s)$, so a clustering with $N_c$ momenta per
interaction cluster requires $N_c\mid L/\gcd(L,s)$. We note that in the case when $\gcd(L,s,r)=1$ there is only one supercluster which spans the entire system.

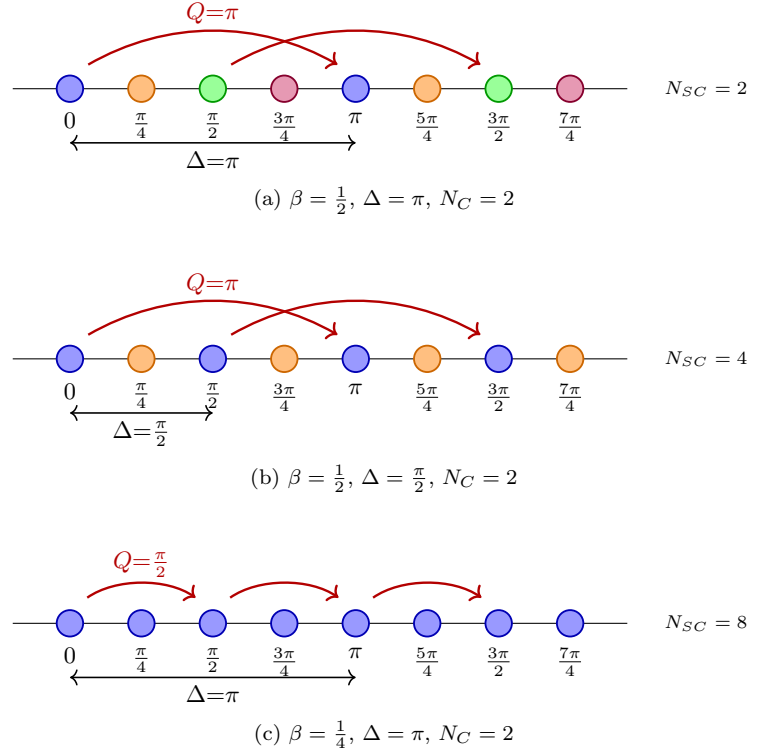
\begin{figure}[t]
\centering

\subfloat[$\beta=\frac{1}{2}$, $\Delta=\pi$, $N_C=2$\label{fig:sc_a}]{%
\begin{tikzpicture}[x=0.95cm,y=0.6cm,font=\footnotesize]
  \draw (-0.8,0) -- (7.8,0);

  \node[below,font=\small] at (0,-0.35) {$0$};
  \node[below,font=\small] at (1,-0.35) {$\frac{\pi}{4}$};
  \node[below,font=\small] at (2,-0.35) {$\frac{\pi}{2}$};
  \node[below,font=\small] at (3,-0.35) {$\frac{3\pi}{4}$};
  \node[below,font=\small] at (4,-0.35) {$\pi$};
  \node[below,font=\small] at (5,-0.35) {$\frac{5\pi}{4}$};
  \node[below,font=\small] at (6,-0.35) {$\frac{3\pi}{2}$};
  \node[below,font=\small] at (7,-0.35) {$\frac{7\pi}{4}$};

  \filldraw[fill=blue!40,draw=blue!70!black,line width=0.7pt] (0,0) circle[radius=5pt];
  \filldraw[fill=blue!40,draw=blue!70!black,line width=0.7pt] (4,0) circle[radius=5pt];
  \filldraw[fill=orange!50,draw=orange!80!black,line width=0.7pt] (1,0) circle[radius=5pt];
  \filldraw[fill=orange!50,draw=orange!80!black,line width=0.7pt] (5,0) circle[radius=5pt];
  \filldraw[fill=green!40,draw=green!60!black,line width=0.7pt] (2,0) circle[radius=5pt];
  \filldraw[fill=green!40,draw=green!60!black,line width=0.7pt] (6,0) circle[radius=5pt];
  \filldraw[fill=purple!40,draw=purple!70!black,line width=0.7pt] (3,0) circle[radius=5pt];
  \filldraw[fill=purple!40,draw=purple!70!black,line width=0.7pt] (7,0) circle[radius=5pt];

  \draw[->,red!70!black,line width=0.9pt,bend left=30,shorten <=8pt,shorten >=8pt]
    (0,0.3) to node[above,font=\small,red!70!black] {$Q{=}\pi$} (4,0.3);
  \draw[->,red!70!black,line width=0.9pt,bend left=30,shorten <=8pt,shorten >=8pt]
    (2,0.3) to (6,0.3);

  \draw[<->,black,line width=0.6pt] (0,-1.2) -- (4,-1.2);
  \node[below,font=\small] at (2,-1.2) {$\Delta{=}\pi$};

  \node[anchor=west,font=\scriptsize] at (8.2,0) {$N_{SC}=2$};
\end{tikzpicture}%
}

\vspace{0.3cm}

\subfloat[$\beta=\frac{1}{2}$, $\Delta=\frac{\pi}{2}$, $N_C=2$\label{fig:sc_b}]{%
\begin{tikzpicture}[x=0.95cm,y=0.6cm,font=\footnotesize]
  \draw (-0.8,0) -- (7.8,0);

  \node[below,font=\small] at (0,-0.35) {$0$};
  \node[below,font=\small] at (1,-0.35) {$\frac{\pi}{4}$};
  \node[below,font=\small] at (2,-0.35) {$\frac{\pi}{2}$};
  \node[below,font=\small] at (3,-0.35) {$\frac{3\pi}{4}$};
  \node[below,font=\small] at (4,-0.35) {$\pi$};
  \node[below,font=\small] at (5,-0.35) {$\frac{5\pi}{4}$};
  \node[below,font=\small] at (6,-0.35) {$\frac{3\pi}{2}$};
  \node[below,font=\small] at (7,-0.35) {$\frac{7\pi}{4}$};

  \filldraw[fill=blue!40,draw=blue!70!black,line width=0.7pt] (0,0) circle[radius=5pt];
  \filldraw[fill=blue!40,draw=blue!70!black,line width=0.7pt] (2,0) circle[radius=5pt];
  \filldraw[fill=blue!40,draw=blue!70!black,line width=0.7pt] (4,0) circle[radius=5pt];
  \filldraw[fill=blue!40,draw=blue!70!black,line width=0.7pt] (6,0) circle[radius=5pt];
  \filldraw[fill=orange!50,draw=orange!80!black,line width=0.7pt] (1,0) circle[radius=5pt];
  \filldraw[fill=orange!50,draw=orange!80!black,line width=0.7pt] (3,0) circle[radius=5pt];
  \filldraw[fill=orange!50,draw=orange!80!black,line width=0.7pt] (5,0) circle[radius=5pt];
  \filldraw[fill=orange!50,draw=orange!80!black,line width=0.7pt] (7,0) circle[radius=5pt];

  \draw[->,red!70!black,line width=0.9pt,bend left=30,shorten <=8pt,shorten >=8pt]
    (0,0.3) to node[above,font=\small,red!70!black] {$Q{=}\pi$} (4,0.3);
  \draw[->,red!70!black,line width=0.9pt,bend left=30,shorten <=8pt,shorten >=8pt]
    (2,0.3) to (6,0.3);

  \draw[<->,black,line width=0.6pt] (0,-1.2) -- (2,-1.2);
  \node[below,font=\small] at (1,-1.2) {$\Delta{=}\frac{\pi}{2}$};

  \node[anchor=west,font=\scriptsize] at (8.2,0) {$N_{SC}=4$};
\end{tikzpicture}%
}

\vspace{0.3cm}

\subfloat[$\beta=\frac{1}{4}$, $\Delta=\pi$, $N_C=2$\label{fig:sc_c}]{%
\begin{tikzpicture}[x=0.95cm,y=0.6cm,font=\footnotesize]
  \draw (-0.8,0) -- (7.8,0);

  \node[below,font=\small] at (0,-0.35) {$0$};
  \node[below,font=\small] at (1,-0.35) {$\frac{\pi}{4}$};
  \node[below,font=\small] at (2,-0.35) {$\frac{\pi}{2}$};
  \node[below,font=\small] at (3,-0.35) {$\frac{3\pi}{4}$};
  \node[below,font=\small] at (4,-0.35) {$\pi$};
  \node[below,font=\small] at (5,-0.35) {$\frac{5\pi}{4}$};
  \node[below,font=\small] at (6,-0.35) {$\frac{3\pi}{2}$};
  \node[below,font=\small] at (7,-0.35) {$\frac{7\pi}{4}$};

  \filldraw[fill=blue!40,draw=blue!70!black,line width=0.7pt] (0,0) circle[radius=5pt];
  \filldraw[fill=blue!40,draw=blue!70!black,line width=0.7pt] (1,0) circle[radius=5pt];
  \filldraw[fill=blue!40,draw=blue!70!black,line width=0.7pt] (2,0) circle[radius=5pt];
  \filldraw[fill=blue!40,draw=blue!70!black,line width=0.7pt] (3,0) circle[radius=5pt];
  \filldraw[fill=blue!40,draw=blue!70!black,line width=0.7pt] (4,0) circle[radius=5pt];
  \filldraw[fill=blue!40,draw=blue!70!black,line width=0.7pt] (5,0) circle[radius=5pt];
  \filldraw[fill=blue!40,draw=blue!70!black,line width=0.7pt] (6,0) circle[radius=5pt];
  \filldraw[fill=blue!40,draw=blue!70!black,line width=0.7pt] (7,0) circle[radius=5pt];

  \draw[->,red!70!black,line width=0.9pt,bend left=35,shorten <=8pt,shorten >=8pt]
    (0,0.3) to node[above,font=\small,red!70!black] {$Q{=}\frac{\pi}{2}$} (2,0.3);
  \draw[->,red!70!black,line width=0.9pt,bend left=35,shorten <=8pt,shorten >=8pt]
    (2,0.3) to (4,0.3);
  \draw[->,red!70!black,line width=0.9pt,bend left=35,shorten <=8pt,shorten >=8pt]
    (4,0.3) to (6,0.3);

  \draw[<->,black,line width=0.6pt] (0,-1.2) -- (4,-1.2);
  \node[below,font=\small] at (2,-1.2) {$\Delta{=}\pi$};

  \node[anchor=west,font=\scriptsize] at (8.2,0) {$N_{SC}=8$};
\end{tikzpicture}%
}%
\caption{Illustration of superclustering on an eight-point momentum grid for several choices of the Aubry-Andr\'e wavevector $Q=2\pi\beta$ and interaction-cluster spacing $\Delta$. Colored dots indicate momenta belonging to the same supercluster, red arrows denote the momentum transfer induced by $\hat V$, and the black double-headed arrow marks the interaction-cluster spacing $\Delta$. (a) For $\beta=\frac{1}{2}$ and $\Delta=\pi$ with $N_C=2$, the hopping $Q=\pi$ stays within each interaction cluster, so the supercluster size remains $N_{SC}=2$. (b) For $\beta=\frac{1}{2}$ and $\Delta=\frac{\pi}{2}$ with $N_C=2$, the hopping $Q=\pi$ connects different interaction clusters, producing two superclusters of size $N_{SC}=4$. (c) For $\beta=\frac{1}{4}$ and $\Delta=\pi$ with $N_C=2$, the hopping $Q=\frac{\pi}{2}$ links all momentum points into the $N_{SC}=8$ supercluster structure shown.}
\label{fig:clustering_with_v}
\end{figure}

Since the relative strength of the onsite potential and the hopping potential $\lambda/t$ controls the localization of electrons in the non-interacting model, we expect this parameter to likewise control the accuracy of a clustering scheme which incorporates the modulation scale. We demonstrate this fact in the following subsections.

\subsubsection{General form of the clustered AAH model}
We now derive the general form of the approximate cluster Hamiltonian in the presence of the onsite modulation. 
Using the result already derived in \cref{eq:full_cluster_ham}, we only need to derive the form of on-site modulation term $\hat{V}=\sum_{i=1}^{N}\lambda\cos(2\pi\beta i +\phi)n_i$ under the general clustering procedure outlined in \cref{sec:general_construction}. 
Re-written in the $\Delta$ clustering scheme, the modulation term becomes
\begin{align}
    \hat{V}&=\lambda/2\sum_{K}\sum_{j=1}^{N_c}c^{\dagger}_{K+j\Delta}c_{K+j\Delta+Q}+h.c.\nonumber \\
    &=\lambda/2\sum_{K}\sum_{j=1}^{N_c}c^{K,\dagger}_{j\Delta}c^{K'(K,j,Q)}_{j'(K,j,Q)\Delta}+h.c.,
\end{align}
Where here we have denoted by $K'(K,j,Q)$ and $j'(K,j,Q)$ the fact that $\hat{V}$ may, in general, hop between different interaction clusters indexed by $K$ and $K'$ and different sites within those clusters $j$ and $j'$ depending on the reference momentum $K$, the relative momentum index $j$ and the separation $\Delta$. We now apply the earlier transform \cref{eq:alpha_transform}. Notice that this transforms second-quantized operators in each cluster $K$ separately, and so we have:
\begin{align}
\label{eq:V_term_alpha_general} 
        \hat{V}
        =
        \frac{\lambda}{2N_c}
        \sum_{K}
\sum_{j,\alpha,\beta=1}^{N_c}
        &e^{-i k_j^0\cdot R^0_{\alpha}}
        e^{+i k_{j'(K,j,Q)}^0\cdot R^0_{\beta}}
        c^{K,\dagger}_{\alpha}c^{K'(K,j,Q)}_{\beta}\nonumber \\
        &+h.c. .
\end{align}
Eq.~\eqref{eq:V_term_alpha_general} is complicated by the fact that the momentum transfer $Q$ can hop between different interaction clusters $K$ and $K'$, which may in turn change the respective within-site cluster indices $j$ and $j'$ connected by the hopping. 

In the special case where the momentum hopping is always within the same cluster this expression assumes a particularly simple form. 
This is the case, for example, for any maximal separation scheme when the momentum hopping $Q$ is equal to the cluster separation, $Q=\Delta$.
In this case, $K'(K,j,Q)=K$ and $k^0_{j'(K,j,Q)}-k^0_j=Q$, a constant independent of the site indices. This allows us to simply perform the sum over within-interaction-cluster indices $j$:
\begin{align}
    \hat{V}
    &=
    \frac{\lambda}{2N_c}
    \sum_{K}\sum_{j=1}^{N_c}\sum_{\alpha,\beta=1}^{N_c}
    e^{-i k_j^0\cdot (R^0_{\alpha}-R^0_{\beta})}
    e^{iQ\cdot R^0_{\beta}}
    c^{K,\dagger}_{\alpha}c^{K}_{\beta}+h.c.\nonumber\\
    &=
    \frac{\lambda}{2}
    \sum_{K}\sum_{\alpha,\beta=1}^{N_c}
    \delta_{\alpha\beta}
    e^{iQ\cdot R^0_\beta}
    c^{K,\dagger}_{\alpha}c^{K}_{\beta}+h.c.
    \nonumber\\
    \label{eq:v_term_within_one_cluster}
    &=
    \lambda\sum_{K}\sum_{\alpha}\cos(Q\cdot R^0_{\alpha})
    n^K_{\alpha}
\end{align}
which is just the form of the original commensurate Aubry-Andr\'e interaction, but in the interaction cluster basis. This reflects the general fact that we saw already for the interacting term: if the structure factor is independent of momentum, and the interaction cluster is large enough to accommodate all coupled momenta, the transformed term takes the same form as the original term, but in the new basis.  

\begin{figure*}[tbp]
    \centering
    \includegraphics[width=0.98\linewidth]{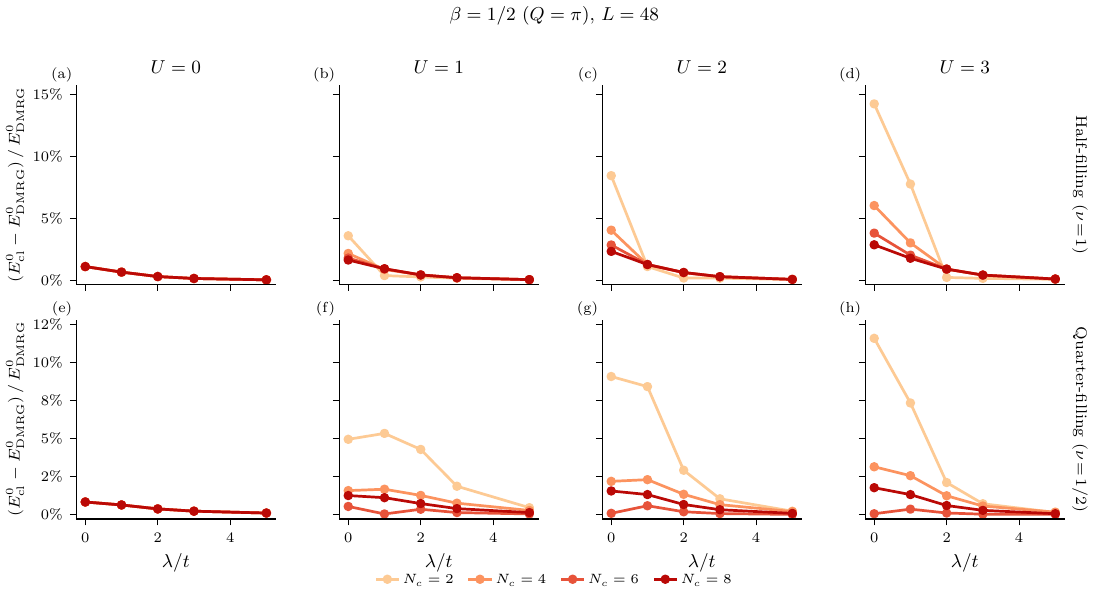}
    \caption{Effect of the Aubry-Andr\'e potential on the convergence
    of maximal-separation clustering schemes at $\beta=1/2$ ($Q=\pi$).
    Each panel shows the relative error in the ground state energy
    versus potential strength $\lambda$, with the top row at half-filling
    ($n=1$) and the bottom row at quarter-filling ($n=0.5$). Columns
    correspond to increasing interaction strength ($U=0,1,2,3$).
    Different curves show maximal-separation schemes with cluster sizes
    $N_c=2$, $4$, $6$, and $8$, all
    benchmarked against periodic finite DMRG ($\chi=128$) on an $L=48$ chain. At half-filling, the
    smallest cluster ($N_c=2$) converges rapidly with increasing $\lambda$
    and matches the $N_c=8$ result once $\lambda\gtrsim U/2$; this reflects
    the Peierls instability at $Q=2k_F=\pi$, which makes the
    $\pi$-separation cluster the dominant momentum channel. At
    quarter-filling the convergence with $\lambda$ is also present but
    requires larger $\lambda$ relative to $U$.}
    \label{fig:scheme_comparison_finite_v_t}
\end{figure*}

\subsection{Exact results for $t=0$}
 At $t=0$, or alternatively in the limit that $\lambda,U\rightarrow\infty$, we can show that the interaction cluster approximation is exact. In particular, for commensurate modulation $\beta=\tfrac{m}{n}$ with $m,n\in\mathbb{Z}$, $\gcd(m,n)=1$, and $n\mid L$, the maximal interaction clustering scheme with nearest-neighbor separation $\Delta=2\pi/n$ (and hence $N_c=n$) has a Hamiltonian unitarily equivalent to the exact Hamiltonian. 
This follows immediately from combining the result proved in \cref{sec:philip_scheme} that the maximal separation scheme maps back to the original real-space finite-site Hubbard model with the fact that in this case all hoppings are within the same cluster.
That is, for $\beta=\tfrac{m}{n}$ and $\Delta=2\pi/n$, the momentum transfer in $\hat{V}$ is given by
\(
Q=2\pi\beta=\tfrac{2\pi m}{n}=m\Delta
\),
so
\begin{equation} 
    K+Q=K+m\Delta=K+\tfrac{2\pi m}{n}=K+\tfrac{2\pi j'}{n},
\end{equation}
where the last equality follows from the fact that the clustering scheme is maximal and so any integer multiple of the interaction cluster separation $\Delta=2\pi/n$ maps back to the same cluster.
This means that all momentum space hoppings are within-cluster, and so combining the earlier results \cref{eq:v_term_within_one_cluster} with the cluster Hamiltonian \cref{eq:full_cluster_ham} gives, at $t=0$,
\begin{equation}
    H(t=0)=
    \sum_{K\in\mathcal K}\sum_{\alpha=1}^{n}
    \left[
    \lambda \cos(Q\cdot R^0_{\alpha})n^K_{\alpha}
    +Un^K_{\alpha,\uparrow}n^K_{\alpha,\downarrow}
    \right].
\end{equation}
This is just equal to \(L/n\) copies of the Hamiltonian within an interacting cluster, which is exactly equal to the finite commensurate \(t=0\) AAH model written in the \(\alpha\) basis. 
Explicitly, associate each of the $L/N_c$ cluster representative $K$ to a coarse lattice index $X$, and each $\alpha$ to a within-cluster site index. We may then relabel
\begin{equation}
c^{K_X}_{\alpha,\sigma}\to c_{XN_c+\alpha,\sigma}, \qquad
n^{K_X}_{\alpha,\sigma}\to n_{XN_c+\alpha,\sigma}.
\end{equation}
Moreover, in one dimension we have
\begin{equation}
    \cos(Q\cdot R^0_\alpha)
    =\cos(2\pi\beta\alpha)
    =\cos(2\pi\beta(XN_c+\alpha)),
\end{equation}
where the last equality uses the maximal-separation condition $N_c=n$ together with \(\beta=m/n\), so that
\(
2\pi\beta XN_c
=2\pi\tfrac{m}{n}Xn
=2\pi Xm\in 2\pi\mathbb{Z}
\)
and hence does not change the value of the cosine. Under this relabeling, the $t=0$ AAH Hamiltonian becomes
\begin{align}
    &H(t=0)
    =
    \sum_{X=0}^{L/n-1}\sum_{\alpha=1}^{n}
    \bigg[
    \lambda \cos\!\left(2\pi\beta(XN_c+\alpha)\right)n_{Xn+\alpha} \nonumber \\
    &+ U n_{Xn+\alpha,\uparrow} n_{Xn+\alpha,\downarrow}
    \bigg].
\end{align}
Relabeling into the real-space lattice coordinate \(i=Xn+\alpha\) then gives the original real-space Hamiltonian:
\begin{equation}
    H(t=0)=\lambda\sum_{i=1}^{L}\cos(2\pi\beta i)n_{i}+U\sum_{i=1}^{L}n_{i,\uparrow}n_{i,\downarrow}.
\end{equation}

\begin{figure*}[tbp]
    \centering
    \includegraphics[width=0.98\linewidth]{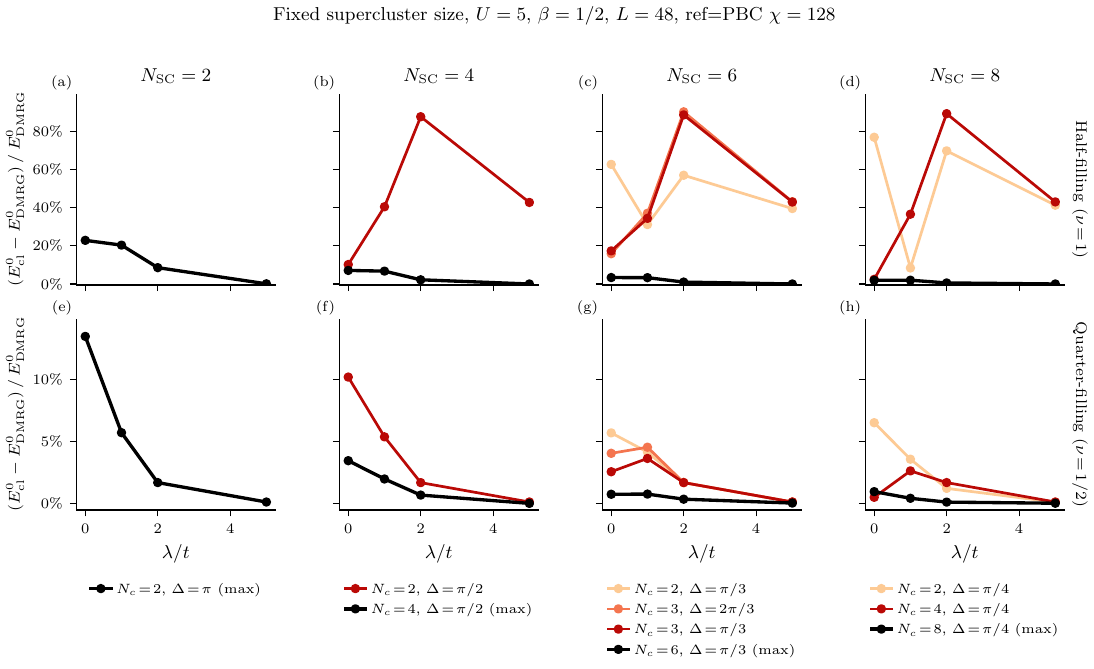}
    \caption{Comparison of maximal versus non-maximal clustering
    schemes at fixed supercluster size $N_{SC}$ and fixed interaction
    strength $U=5$, with $\beta=1/2$. Each column fixes the
    supercluster size (from $N_{SC}=2$ to $N_{SC}=8$); the top row
    shows half-filling ($n=1$) and the bottom row quarter-filling
    ($n=0.5$). Within each panel, different curves correspond to
    different combinations of interaction cluster size $N_c$ and
    spacing $\Delta$ that yield the same $N_{SC}$---and hence the same
    computational cost---with the maximal-separation scheme shown in
    black. At $N_{SC}=4$ and $N_{SC}=6$, a non-maximal $N_c=2$ scheme
    (whose interaction clusters are fused into a larger supercluster by
    the onsite potential) becomes competitive with the corresponding maximal scheme
    at large $\lambda$ values, despite retaining fewer interacting
    momentum modes within each interaction cluster. All results are
    benchmarked against PBC finite DMRG ($\chi=128$) on an $L=48$ chain.}
    \label{fig:v_convergence_fixedsc_v_pi}
\end{figure*}

\subsection{$\hat{V}$ accelerates convergence at finite $t$}
At finite $t$, we expect that the Aubry-Andr\'e potential $\hat{V}$ will accelerate convergence of the clustering scheme to the exact ground state energy. In this section, we verify numerically that this is the case at half- and quarter-filling. 
In general, in the presence of the onsite potential $\hat{V}$, the supercluster size determines the computational cost of exact diagonalization of the resulting Hubbard clusters.
In this section, we consider the family of maximal separation schemes defined by requiring that the $\hat{V}$ hopping always falls within the same interaction cluster, which sets the supercluster size equal to the interaction cluster size $N_{SC}=N_c$. In the next section we relax this condition and consider alternative schemes at fixed supercluster sizes.

In the presence of an onsite potential $\hat{V}$ there is no corresponding exact solution. Instead, we use finite-system DMRG with periodic couplings as implemented in \texttt{TeNPy}~\cite{hauschild2018tenpy}. Although periodic-boundary DMRG is less efficient than the corresponding open-boundary calculation and therefore does not scale as favorably with bond dimension $\chi$~\cite{verstraete2004dmrgpbc,pippan2010efficientpbc}, the chain sizes considered here are moderate, so this constraint is not severe in practice. We verified convergence of the DMRG reference with increasing $\chi$, and also confirmed that the PBC calculation gives a lower ground-state energy than the corresponding OBC run, as expected for the finite periodic system that is most directly comparable to our clustering construction.

The maximal interaction separation at a given $\beta=m/n$, with $m,n$ coprime, is set as $\Delta^{\text{max}}_{\beta}=2\pi/n$ with interaction cluster size set to $n$.
The family of allowed maximal separation schemes is then the set of separations obtained by dividing $\Delta^{\text{max}}$ by any positive integer,
\[
    \Delta=\frac{\Delta^{\text{max}}}{Z}=\frac{2\pi}{nZ},
    \qquad
    N_c=nZ,
\]
with the finite-size divisibility condition \(nZ \mid L\).

The resulting comparisons to DMRG are shown in \cref{fig:scheme_comparison_finite_v_t} for $\beta=1/2$ (\(Q=\pi\)). Additionally in Appendix~\ref{sec:app_numerical_results} we show the results for $\beta=1/3$ (\(Q=2\pi/3\)), and $\beta=1/4$ (\(Q=\pi/2\)) in \cref{fig:scheme_comparison_finite_v_t_2pi3,fig:scheme_comparison_finite_v_t_1pi4}, respectively. The same qualitative picture holds in each case.

For $N_c=2$ and $\beta=1/2$ (\(Q=\pi\)), this effect is particularly striking.
At half-filling, shown in the top row of \cref{fig:scheme_comparison_finite_v_t}, we see that the two-site clustering reaches the performance of the eight-site clustering scheme when $U<\lambda$. 
This reflects the Peierls instability to a modulation $Q=2k_F=\pi$, which couples the opposite Fermi points of the non-interacting system. The convergence is slower for $\beta=1/3$ (\cref{fig:scheme_comparison_finite_v_t_2pi3}) and $\beta=1/4$ (\cref{fig:scheme_comparison_finite_v_t_1pi4}), because these modulations do not directly couple the two Fermi points.

At quarter-filling for $\beta=1/2$ (\(Q=\pi\)), shown in the bottom row of \cref{fig:scheme_comparison_finite_v_t}, the two-site clustering scheme also converges to the larger-site clustering as we increase $\lambda$. In Appendix~\ref{sec:app_numerical_results} we also show (see \cref{fig:scheme_comparison_finite_v_t_highU}) that for sufficiently large $U$, the two-site clustering scheme can outperform the largest $N_c=8$ clustering scheme we consider here, and that this outperformance increases as we increase $\lambda$.

\subsection{Non-maximal separation schemes at finite $t$}

Having shown in the previous subsection that increasing the strength of the Aubry-Andr\'e modulation $\hat V$ can give excellent agreement with the DMRG energy for small clusters, we now ask a sharper question: at fixed computational cost, is the maximal separation scheme also the best choice? 
In the presence of the onsite potential, the relevant cost is set by the
supercluster size $N_{SC}$. We therefore compare the maximal scheme with
$N_c=N_{SC}$ to non-maximal schemes with smaller interaction clusters
$N_c<N_{SC}$ for which the Aubry-Andr\'e hopping $\hat V$ fuses the interaction clusters into superclusters of the same size $N_{SC}$.

A simple mode-counting argument would suggest that the maximal scheme should
perform best in this comparison. At fixed $N_{SC}$, the maximal scheme retains the largest set of interaction channels within the supercluster, whereas a fused scheme treats only a smaller subset of those momenta as directly interacting. If the quality of the approximation were controlled only by the number of retained interaction modes, the maximal scheme would therefore be optimal.

The data in \cref{fig:v_convergence_fixedsc_v_pi} show that this expectation is largely correct. At half-filling, the maximal scheme remains the most
accurate across the parameter range shown. At quarter-filling, the
fused $N_c=2$ schemes for $N_{SC}=4$ and $N_{SC}=6$ achieve competitive performance with the cost-matched maximal scheme over an intermediate window of $\lambda$, even though they retain fewer interacting momentum modes. This suggests that once $\hat V$ selects a momentum scale, fixed-cost accuracy is not determined solely by raw mode count. As in the $V=0$ Hubbard comparison, the organization of the retained channels matters, and understanding how to choose the optimal clustering at fixed $N_{SC}$ is an interesting direction for further work.

\section{Conclusion}

In this work, we have introduced a general clustering scheme that allows us to approximate the Hubbard model by preserving selected momentum channels in the interaction Hamiltonian. This generalizes previous work on HK models, which either retains only a single momentum channel (and therefore corresponds to our $N_c=1$ case), or preserves only maximally separated momentum interactions as in the MMHK construction.
We showed that our construction recovers the twist-averaged finite-site Hubbard model, with, in general, all-all hoppings $J_{\alpha\beta}$.
In particular, we showed that the maximal scheme is the special case when the approximate model is a finite-site Hubbard model of the same form as the original Hamiltonian. 
This clarifies that the surprising phenomenological success of HK models in reproducing a wide range of Mott physics arises from their equivalence to the finite-site Hubbard models with twist averaging.
We then show that there are alternative clustering schemes available from our general construction that numerically outperform the maximal scheme in reproducing the ground state energy of the Hubbard model, especially at quarter-filling.

In the original Hubbard model, no particular momentum mixing channels in the interacting term are favored. Our general scheme also allows us to extend to the case in which an onsite potential leads to an emergent superclustering by favoring certain momentum mixing channels.
In this case, it is most important to capture momentum couplings within the effective reciprocal supercell. We are hence able to show that increasing the strength of the onsite potential can lead to the convergence of surprisingly small Hubbard superclusters. We demonstrated this through a study of the Aubry-Andr\'e Hubbard model
 
Our work can be extended in three main directions. First, the Aubry-Andr\'e Hubbard model is effectively the one-dimensional analogue of the physics that arises in the Bistritzer-MacDonald model of twisted bilayer graphene. Extending our clustering scheme to two dimensions would allow the Hubbard interaction to be introduced into the BM model on an equal footing to non-interacting terms and is hence a natural application of the framework given here. Second, we can only numerically access small cluster sizes in the commensurate AAH model. To extend to the incommensurate case requires developing an approximate treatment of intercluster hopping terms. Finally, it would be interesting to understand the connection between the momentum mixing given by the scheme described here to the momentum mixing implicit in a Bethe Ansatz treatment of the standard Hubbard model.

\begin{acknowledgments}
The authors thank P. Phillips and L.K. Wagner for helpful discussions. This work was supported by the U.S. National Science Foundation under grant No. DMR-2510219.
\end{acknowledgments}

\clearpage   
\onecolumngrid
\appendix

\section{Real space form of the interaction for minimal separation}
\label{sec:app_minimal_sep_real_space}
In this section we consider the clustering scheme where the separation takes its minimal value, $\Delta=2\pi/L$, as opposed to the ``maximal'' clustering scheme analyzed in Sec.~\ref{sec:philip_scheme}. 
We use a ``symmetric parameterization'' in which each interaction cluster contains an odd number of
modes $N_c=2S+1$, and we place the cluster representative $K$ at the center of the cluster.
As before, we start from the cluster truncated Hamiltonian \cref{eq:cluster_truncated_interaction}:
\begin{equation}
    H_{\mathrm{int}}
    =
    \frac{U}{N_c}\sum_{\mathbf{K}\in\mathcal{K}}
    \sum_{\substack{\mathbf{k}_{1},\mathbf{k}_{2},\mathbf{q}\in \mathcal C_{\mathbf 0}}}
    c^\dagger_{\mathbf{K}+(\mathbf{k}_{1}\oplus\mathbf{q}),\uparrow}c_{\mathbf{K}+\mathbf{k}_{1},\uparrow}
    \,c^\dagger_{\mathbf{K}+(\mathbf{k}_{2}\ominus\mathbf{q}),\downarrow}c_{\mathbf{K}+\mathbf{k}_{2},\downarrow}.
\end{equation}

Define the number of clusters $N_X:=L/N_c$ and a cluster index $M\in\{0,1,\dots,N_X-1\}$, with
\begin{equation}
    K_M := \frac{2\pi}{L}(N_c M + S).
\end{equation}
Within cluster $M$ we parameterize the within-cluster momenta as:
\begin{equation}
    k_1 = K_M + n\Delta,\qquad k_2 = K_M + m\Delta,\qquad n,m\in \mathbb{Z}_{N_c}.
\end{equation}
Throughout, we identify $\mathbb{Z}_{N_c}$ with the symmetric representatives
$S_N:=\{-S,-S+1,\dots,0,\dots,S-1,S\}$, and we use $\oplus,\ominus$ for addition/subtraction modulo $N_c$. Fourier transforming the approximate cluster Hamiltonian into real space, we write the interaction as
\begin{align}
    H_{\mathrm{int}}
    &= \sum_{R_1,\dots,R_4}\sum_{M=0}^{N_X-1}
    e^{iK_M A}\;
    \mathcal{I}(R_1,R_2;R_3,R_4)\;
    c^\dagger_{R_1,\uparrow}c_{R_2,\uparrow}c^\dagger_{R_3,\downarrow}c_{R_4,\downarrow},
\end{align}
with 
\begin{equation}
    A := R_1+R_3-R_2-R_4,
\end{equation}
and  $\mathcal{I}$ defined as the within-cluster sum
\begin{equation}
    \mathcal{I}(R_1,R_2;R_3,R_4)
    :=\sum_{n,m,q\in\mathbb{Z}_{N_c}}
    \exp\!\Big(i\Delta\big[(n\oplus q)R_1-nR_2+(m\ominus q)R_3-mR_4\big]\Big).
\end{equation}
We then perform each sum in turn, starting with the inner sum.
Using the definition of $K_M$,
\begin{align}
    \sum_{M=0}^{N_X-1}e^{iK_M A}
    &= e^{i\frac{2\pi}{L}SA}\sum_{M=0}^{N_X-1}e^{i\frac{2\pi}{N_X}MA}
     = e^{i\frac{2\pi}{L}SA}\;N_X\,\delta_{A\equiv 0\ (\mathrm{mod}\ N_X)}.
\end{align}
To deal cleanly with the wrap-around indices in $n\oplus q$ and $m\ominus q$, we use the standard
discrete convolution theorem on $\mathbb{Z}_{N_c}$.  Let $\omega:=e^{i2\pi/N_c}$ and define
\begin{equation}
    C_q(R_A,R_B)
    :=\sum_{n\in\mathbb{Z}_{N_c}} e^{i\Delta(n\oplus q)R_A}\,e^{-i\Delta n R_B}.
\end{equation}
This is a discrete convolution $C_q=\sum_n f_{n\oplus q}\,\overline{g_n}$ with
\begin{equation}
    f_n(R_A):=e^{i\Delta nR_A},\qquad g_n(R_B):=e^{i\Delta nR_B}
\end{equation}
Let $F_p(R_A),G_p(R_B)$ denote their $N_c$-point DFTs:
\begin{align}
    &F_p(R_A):=\sum_{n\in\mathbb{Z}_{N_c}} f_n(R_A)\,\omega^{-pn},\qquad \nonumber \\
    &G_p(R_B):=\sum_{n\in\mathbb{Z}_{N_c}} g_n(R_B)\,\omega^{-pn}.
\end{align}
Then the discrete convolution theorem gives
\begin{equation}
    C_q(R_A,R_B)=\frac{1}{N_c}\sum_{p=0}^{N_c-1}F_p(R_A)\,\overline{G_p(R_B)}\,\omega^{pq}.
\end{equation}
With the symmetric representative set $n=-S,\dots,S$ (so $N_c=2S+1$), these DFTs are Dirichlet kernels:
\begin{align}
    F_p(R_A)
    &=\sum_{n=-S}^{S}e^{in(\Delta R_A-2\pi p/N_c)}
      =:D_S\!\left(\Delta R_A-\frac{2\pi p}{N_c}\right),\\
    G_p(R_B)
    &=\sum_{n=-S}^{S}e^{in(\Delta R_B-2\pi p/N_c)}
      =:D_S\!\left(\Delta R_B-\frac{2\pi p}{N_c}\right),
\end{align}
where $D_S(\theta):=\sum_{n=-S}^{S}e^{in\theta}=1+2\sum_{r=1}^{S}\cos(r\theta)$. Since this is real,
the conjugate $\overline{G_p(R_B)}=G_p(R_B)$.

Noting that the $n$- and $m$-dependent parts factorize, we can write
\begin{equation}
    \mathcal{I}(R_1,R_2;R_3,R_4)=\sum_{q\in\mathbb{Z}_{N_c}} C_q(R_1,R_2)\,C_{-q}(R_3,R_4).
\end{equation}
Substituting the DFT form and using $\sum_{q\in\mathbb{Z}_{N_c}}\omega^{(p-p')q}=N_c\,\delta_{p,p'}$ yields
\begin{align}
    \mathcal{I}(R_1,R_2;R_3,R_4)
    &=\frac{1}{N_c}\sum_{p=0}^{N_c-1}
      \Bigg[\prod_{i=1}^{4}D_S\!\left(\Delta R_i-\frac{2\pi p}{N_c}\right)\Bigg],
\end{align}
where $R_i$ stands for $R_1,R_2,R_3,R_4$ respectively in the product.

Noting that $D_S(\Delta R-\frac{2\pi p}{N_c})=D_S(\frac{2\pi}{L}(R-pN_X))$, we can see that each factor is peaked when
$R\simeq pN_X$ (mod $L$). Thus, the wrapped minimal-separation interaction can be viewed as follows:
the interaction is local in the \emph{cluster-orbital} basis (i.e., \cref{eq:diagonal_alpha_interaction}), but when expressed in real space it becomes a long-ranged, oscillatory four-fermion term whose spatial envelope is set by the Dirichlet kernel width $\sim N_X$.

From the momentum-space perspective, retaining only consecutive modes corresponds to a channel selection biased toward
\emph{small} momentum transfers within each cluster; large-$q$ scattering processes (such as $2k_F$ backscattering or
Umklapp at half filling) are only recovered once $N_c$ is large enough that the relevant transfers fit inside a block.
Accordingly, at small $N_c$ this truncation preferentially captures forward-scattering processes. Taking the limit where the cluster size goes to system size $N_c\to L$, 
each factor $D_S(\Delta R_i-2\pi p/N_c)$ becomes increasingly localized - in the limit converging to a delta function. For finite values of the cluster size $N_c$ it yields a controlled, oscillatory smearing with range set by $N_c a$.

\section{Comparison with Momentum-Mixing HK model}
\label{sec:mmhk_notation_comparison}
To help with comparison to the literature we make explicit here how the maximal scheme considered here recovers the ``Momentum-Mixing'' HK model. Starting from the maximal Hamiltonian, \cref{eq:full_cluster_ham}:
\begin{align}
    H=\sum_{\mathbf K\in\mathcal K}&\left[
    \sum_{a=1}^{z}t_a e^{i\mathbf{K}\cdot\mathbf{r}_a}
    \sum_{\alpha,\beta=1}^{N_c}\sum_{\sigma}
    J^{a}_{\alpha\beta}c^{\dagger,\mathbf K}_{\alpha,\sigma}c^{\mathbf K}_{\beta,\sigma}
    +\mathrm{h.c.}\right.
    \left.+U\sum_{\alpha=1}^{N_c}n^{\mathbf K}_{\alpha,\uparrow}n^{\mathbf K}_{\alpha,\downarrow}
    \right].
\end{align}
In our notation the full term $e^{i\mathbf{K}\cdot\mathbf{r}_a}J_{\alpha\beta}+h.c.$ is equivalent to the matrix $g_{\alpha,\alpha'}$ in~\cite{Mai2025momentummixings}. Explicitly, the case considered there was the two-dimensional $N_c=4$ case with two nearest-neighbor hoppings of equal strength $t_{1}=t_2=t$, and corresponding hopping vectors $\mathbf{r}_{1}=a\hat{\mathbf{e}}_x,\mathbf{r}_2=a\hat{\mathbf{e}}_{y}$ and two next-nearest-neighbor hoppings of equal strength, $t_3=t_4=t'$ and corresponding vectors, $\mathbf{r}_3=a(\hat{\mathbf{e}}_x+\hat{\mathbf{e}}_y),\mathbf{r}_4=a(-\hat{\mathbf{e}}_x + \hat{\mathbf{e}}_y)$. In that case, the hopping matrices and phase factors become\footnote{Note that we are imposing periodic boundary conditions on the hoppings \textit{within} a cluster}:
\begin{align}
&
t e^{i\mathbf{K}_x a}
\begin{pmatrix}
0&1&0&0\\
1&0&0&0\\
0&0&0&1\\
0&0&1&0\\
\end{pmatrix},
t e^{i\mathbf{K}_y a}
\begin{pmatrix}
0&0&1&0\\
0&0&0&1\\
1&0&0&0\\
0&1&0&0\\
\end{pmatrix} \nonumber \\
& t' e^{i(K_x+K_y) a}
\begin{pmatrix}
0&0&0&1\\
0&0&1&0\\
0&1&0&0\\
1&0&0&0\\
\end{pmatrix},
t' e^{i(-K_x+K_y) a}
\begin{pmatrix}
0&0&0&1\\
0&0&1&0\\
0&1&0&0\\
1&0&0&0\\
\end{pmatrix}
\end{align}
And so the total cluster kinetic energy is:
\begin{equation}
    T_\mathbf{K}=\sum_{a=1}^{4}\sum_{\alpha\beta}e^{i\mathbf{K}\cdot\mathbf{r}_a}J^a_{\alpha\beta}+e^{iK_y\cdot a}J^a_{\alpha\beta}h.c.=
    \begin{pmatrix}
    0&\varepsilon_x&\varepsilon_y&\varepsilon_{xy}\\
    \varepsilon_x&0&\varepsilon_{xy}&\varepsilon_y\\
    \varepsilon_y&\varepsilon_{xy}&0&\varepsilon_x\\
    \varepsilon_{xy}&\varepsilon_{y}&\varepsilon_x&0\\
    \end{pmatrix},
\end{equation}
with $\varepsilon(x)=-2t\cos(K_x a),\varepsilon(y)=-2t\cos(K_y a), \varepsilon_{xy}=-4t'\cos(K_x a)\cos(K_y a)$, reproducing Eq. (7) of~\cite{Mai2025momentummixings}.

\section{Additional numerical results}
\label{sec:app_numerical_results}

In this appendix we present supplementary numerical data for the cluster approximation introduced in the main text. The main text figures show the relative error of the ground state energy relative to exact or approximately exact benchmarks; here we complement them with absolute energies, additional data near weak coupling, and filling curves $\nu(\mu_0)$ that probe the thermodynamic response. We organize the results by figure: the one-dimensional Hubbard benchmarks (\cref{sec:app_hubbard}), the AAH convergence at $\beta=1/2$ (\cref{sec:app_aah}), and the fixed supercluster comparison (\cref{sec:app_fixedsc}).

\subsection{Additional data for Hubbard benchmarks (\texorpdfstring{$\lambda=0$}{lambda=0})}
\label{sec:app_hubbard}

\Cref{fig:hub_abs_energy} shows the absolute energy per site for the pure one dimensional Hubbard model ($\lambda=0$) as a function of $U/t$, corresponding to the relative error data in \cref{fig:scheme_comparison_v0}. Even at moderate cluster sizes, the cluster energies closely track the exact result across the full range of $U$.

\begin{center}
    \includegraphics[width=0.98\linewidth]{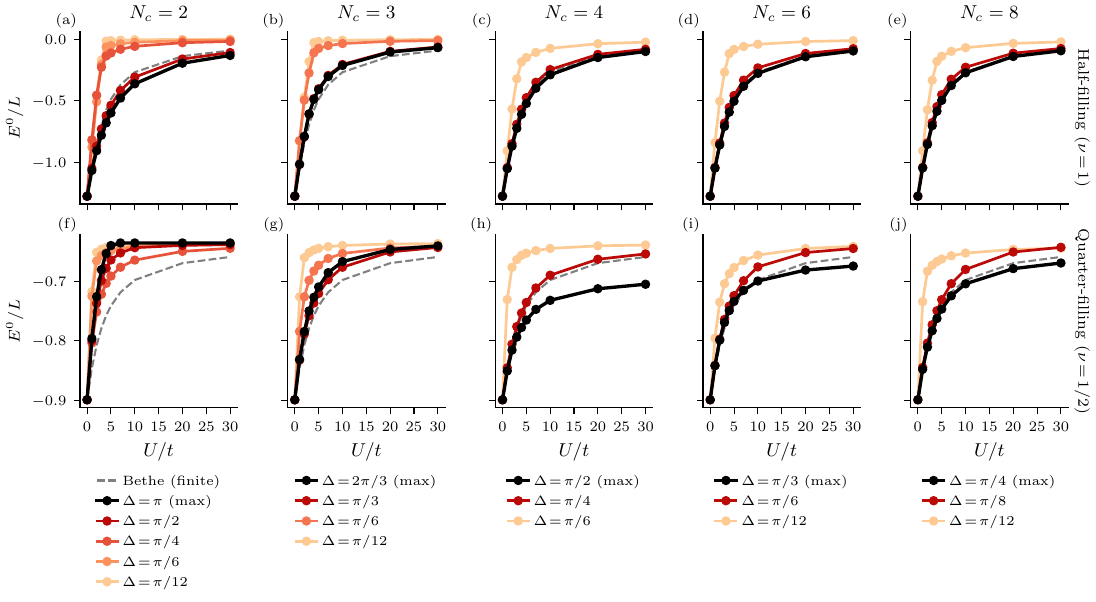}
    \captionof{figure}{Absolute ground state energy per site $E^0/L$ as a function of $U/t$ for $\lambda=0$, $L=48$. Each column corresponds to a cluster size $N_c$; top row is half-filling, bottom row is quarter-filling. Colored lines show different interaction cluster separations $\Delta$ (maximal in black); the grey dashed line is the exact finite periodic Bethe ansatz, consistent with the reference used in \cref{fig:scheme_comparison_v0}.}
    \label{fig:hub_abs_energy}
\end{center}

Ground state energy is typically the first quantity to converge in approximate methods. As a more stringent test, we compare the filling $\nu(\mu_0)$ obtained by sweeping the chemical potential at fixed $U$. \Cref{fig:compressibility_hub} shows these filling curves for the full $U$ range. At large $U$, the staircase structure of the (Mott) insulating plateaus is clearly resolved; different interaction separations $\Delta$ are compared in each $(U, N_c)$ panel.

\begin{center}
    \includegraphics[width=0.90\linewidth]{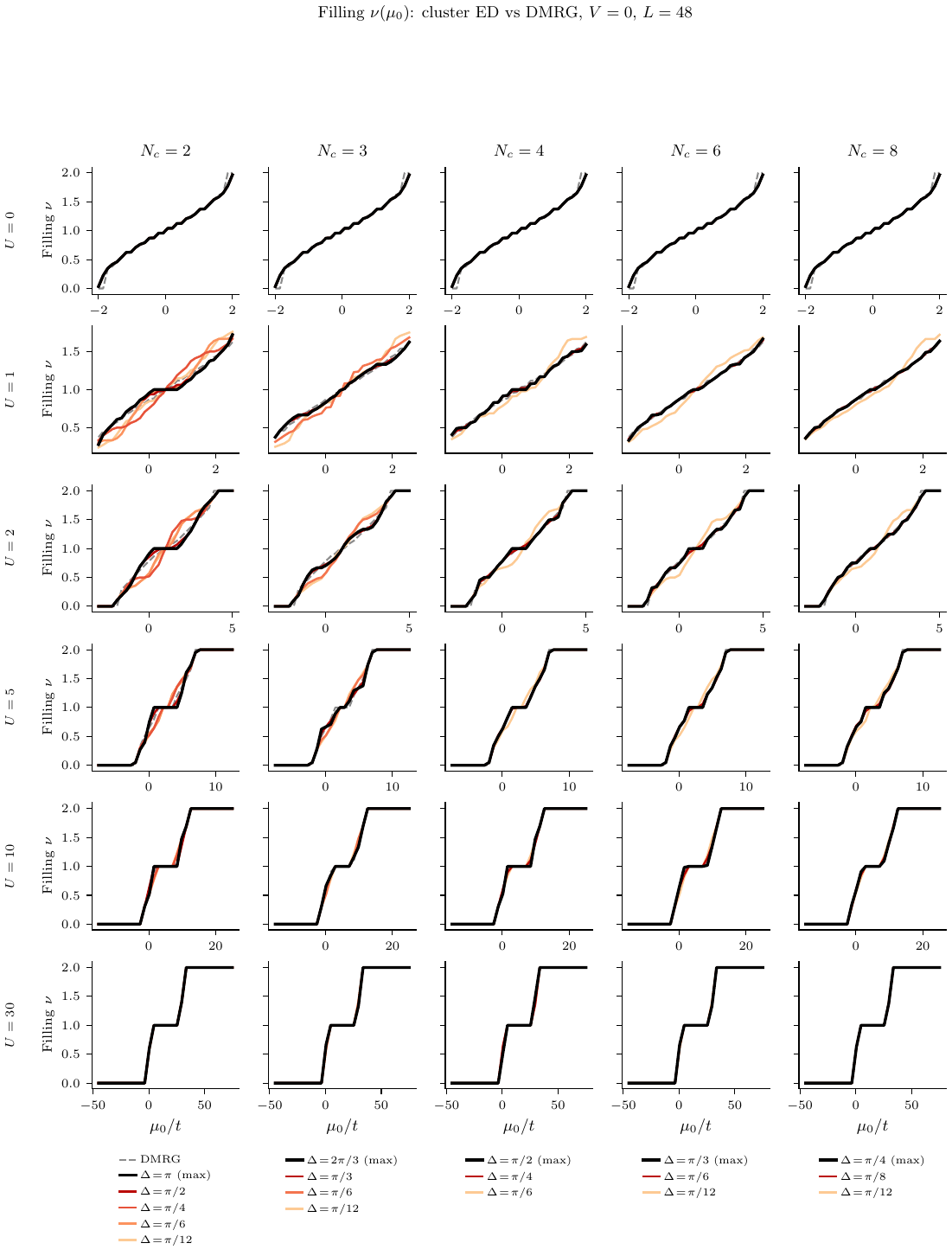}
    \captionof{figure}{Filling $\nu(\mu_0)$ for $\lambda=0$, $L=48$. Rows correspond to $U$ values; columns to cluster sizes $N_c$. Colored lines show different interaction separations $\Delta$ (maximal in black); grey dashed is OBC DMRG ($\chi=32$).}
    \label{fig:compressibility_hub}
\end{center}

\Cref{fig:hub_filling_relerr} shows the relative error of the filling. The largest deviations occur near the edges of the Mott plateau, where the filling changes rapidly and the cluster approximation must accurately resolve the charge gap.

\begin{center}
    \includegraphics[width=0.98\linewidth]{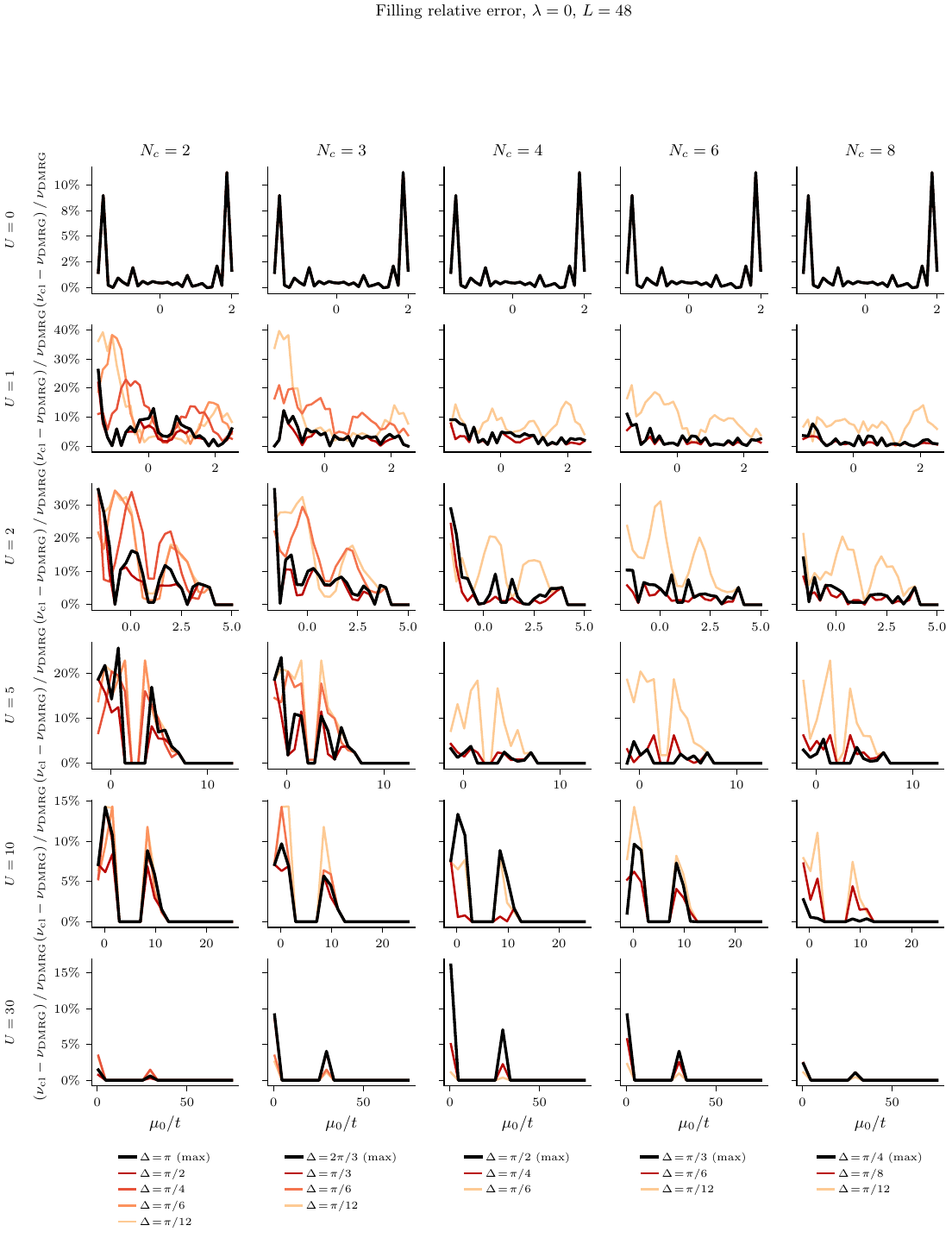}
    \captionof{figure}{Relative filling error $|\nu_{\mathrm{cl}} - \nu_{\mathrm{DMRG}}|/\nu_{\mathrm{DMRG}}$ vs $\mu_0/t$ for $\lambda=0$, $L=48$. Same layout as \cref{fig:compressibility_hub}. Reference is OBC DMRG ($\chi=32$).}
    \label{fig:hub_filling_relerr}
\end{center}

\subsection{Additional data for AAH convergence (\texorpdfstring{$\beta=1/2$}{beta=1/2})}
\label{sec:app_aah}

\Cref{fig:v_abs_energy_1-2} shows the absolute energy as a function of $U/t$ at $\beta=1/2$, with columns corresponding to different values of the Aubry-Andr\'e potential $\lambda$. This complements the main text \cref{fig:scheme_comparison_finite_v_t}, which sweeps $\lambda$ at fixed $U$, by showing the $U$-dependence at fixed $\lambda$.

\begin{center}
    \includegraphics[width=0.98\linewidth]{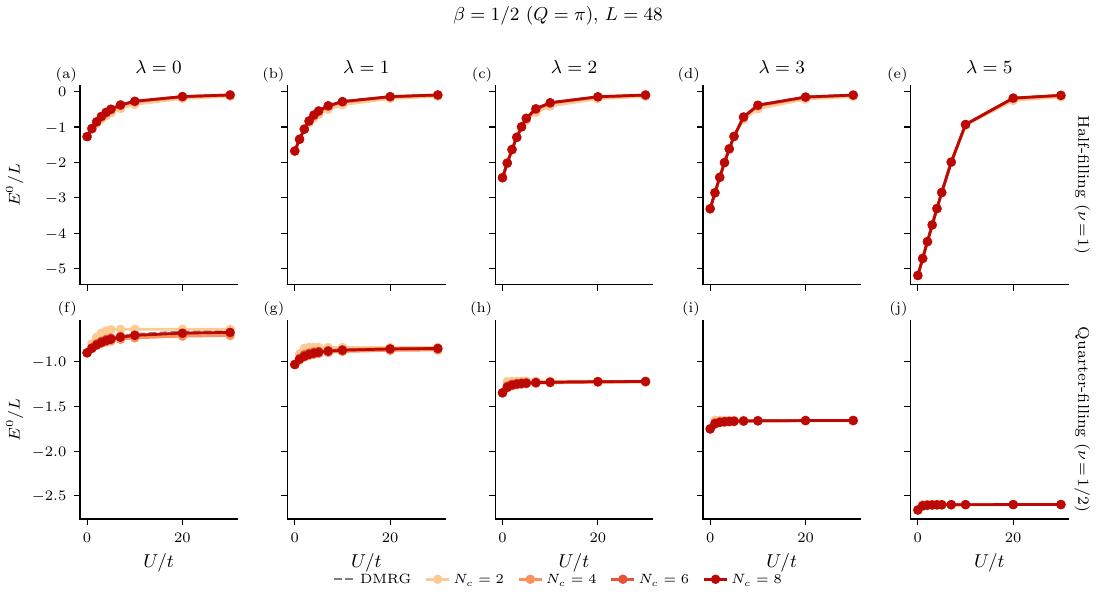}
    \captionof{figure}{Absolute ground state energy per site $E^0/L$ as a function of $U/t$ for $\beta=1/2$, $L=48$. Columns correspond to different $\lambda$ values; top row is half-filling, bottom row is quarter-filling. Lines show different cluster sizes $N_c$; grey dashed is finite OBC DMRG ($\chi=32$).}
    \label{fig:v_abs_energy_1-2}
\end{center}

\Cref{fig:scheme_comparison_finite_v_t_highU} extends the $\beta=1/2$ analysis to the large-$U$ regime ($U=5,7,10,20$), where charge fluctuations are suppressed and the system approaches the Mott insulating limit.

\begin{center}
    \includegraphics[width=0.98\linewidth]{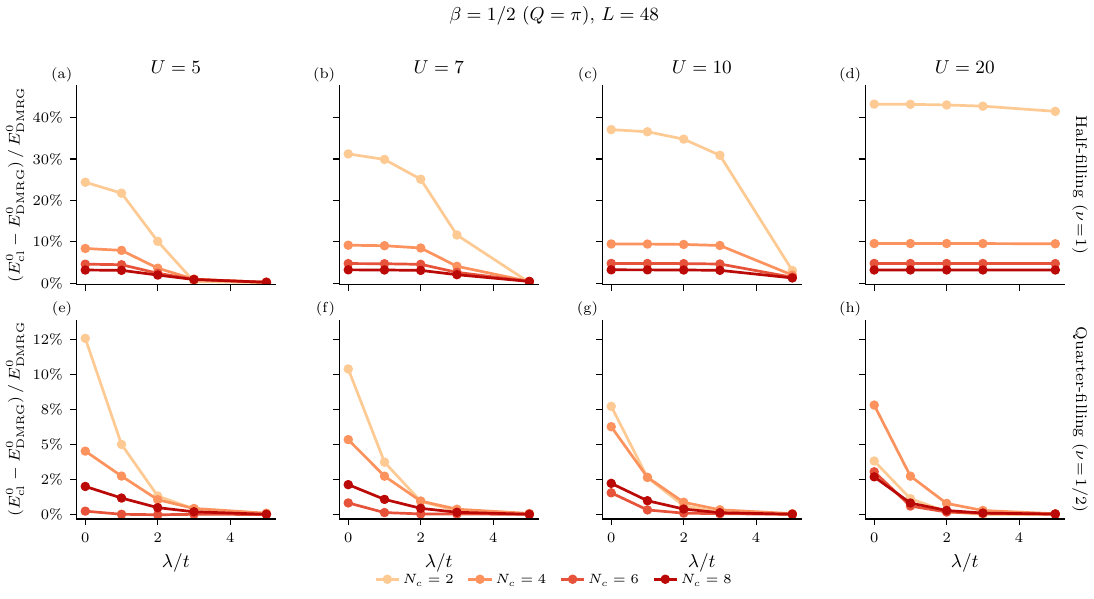}
    \captionof{figure}{Relative energy error as a function of $\lambda/t$ at $\beta=1/2$ in the large-$U$ regime, $L=48$. Top row: half-filling. Bottom row: quarter-filling. Reference is OBC DMRG ($\chi=32$).}
    \label{fig:scheme_comparison_finite_v_t_highU}
\end{center}

We also show the convergence at additional modulation ratios. \Cref{fig:scheme_comparison_finite_v_t_2pi3} and \cref{fig:v_abs_energy_1-3} show respectively the relative and absolute energies for $\beta=1/3$ ($Q=2\pi/3$); \cref{fig:scheme_comparison_finite_v_t_1pi4} and \cref{fig:v_abs_energy_1-4} show $\beta=1/4$ ($Q=\pi/2$). The convergence pattern is qualitatively similar across all $\beta$ values.

\begin{center}
    \includegraphics[width=0.98\linewidth]{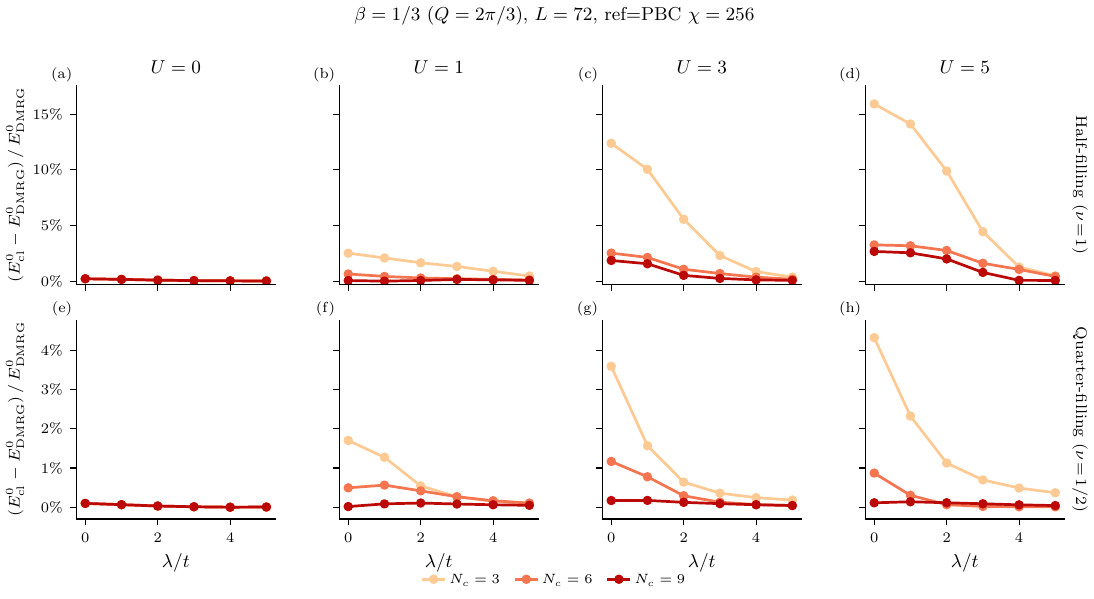}
    \captionof{figure}{Relative energy error as a function of $\lambda/t$ at $\beta=1/3$ ($Q=2\pi/3$), $L=72$. Top row: half-filling. Bottom row: quarter-filling. Reference is PBC DMRG ($\chi=256$).}
    \label{fig:scheme_comparison_finite_v_t_2pi3}
\end{center}

\begin{center}
    \includegraphics[width=0.98\linewidth]{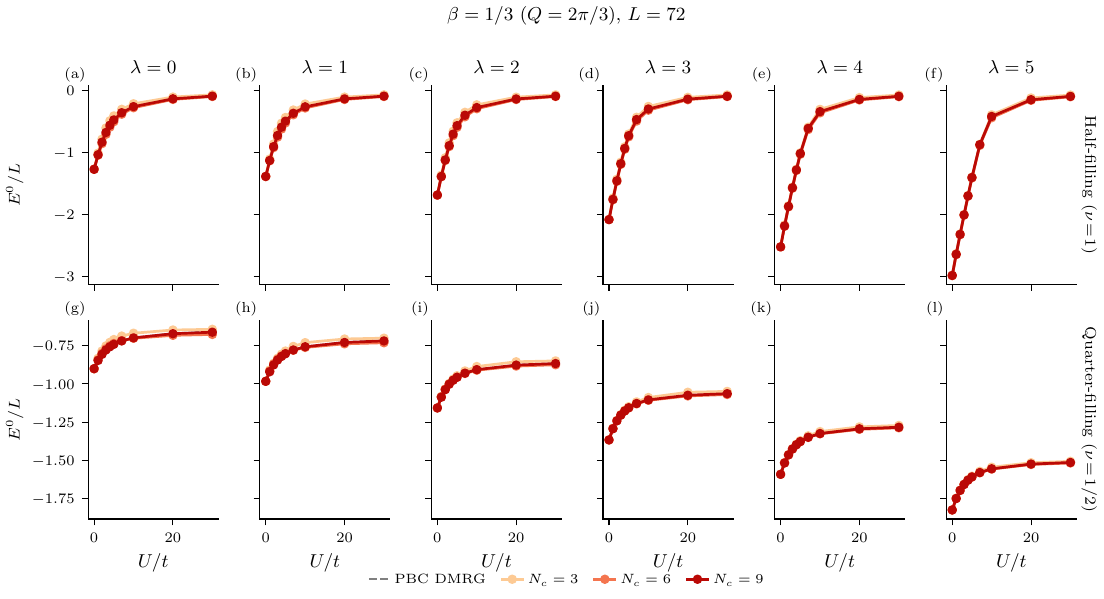}
    \captionof{figure}{Absolute ground state energy per site $E^0/L$ as a function of $U/t$ for $\beta=1/3$ ($Q=2\pi/3$), $L=72$. Columns correspond to different $\lambda$ values; lines show different cluster sizes $N_c$; grey dashed is PBC DMRG ($\chi=256$).}
    \label{fig:v_abs_energy_1-3}
\end{center}

\begin{center}
    \includegraphics[width=0.98\linewidth]{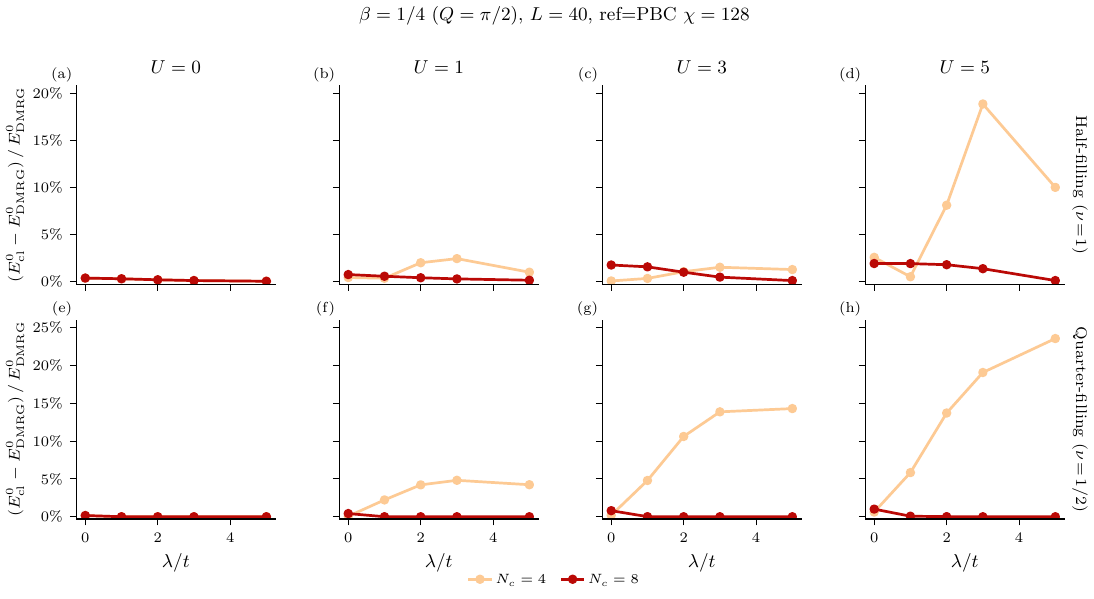}
    \captionof{figure}{Relative energy error as a function of $\lambda/t$ at $\beta=1/4$ ($Q=\pi/2$), $L=40$. Top row: half-filling. Bottom row: quarter-filling. Reference is PBC DMRG ($\chi=128$).}
    \label{fig:scheme_comparison_finite_v_t_1pi4}
\end{center}

\begin{center}
    \includegraphics[width=0.98\linewidth]{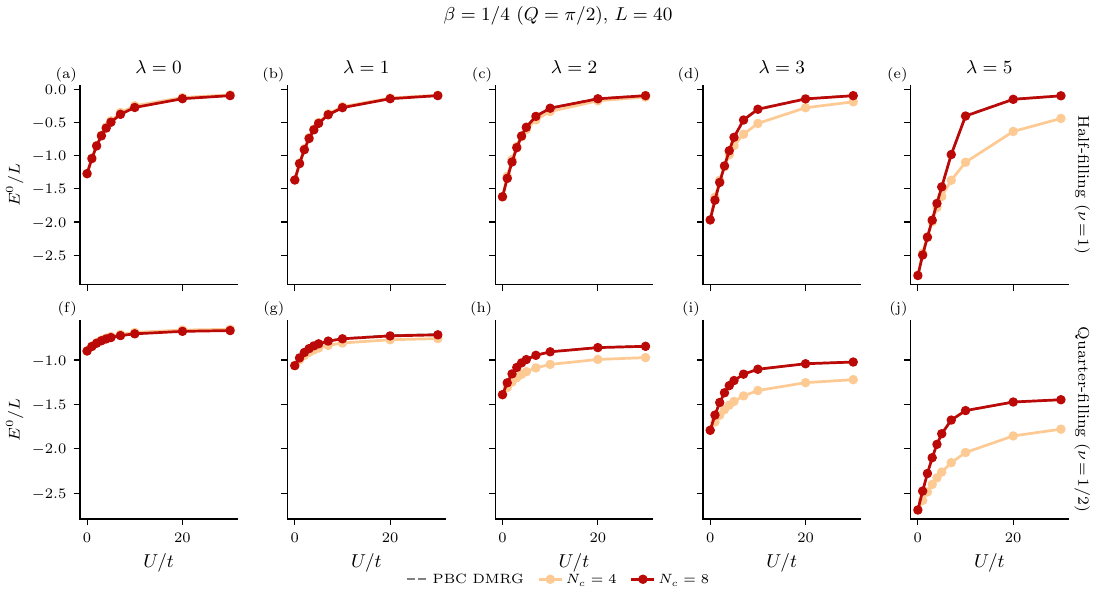}
    \captionof{figure}{Absolute ground state energy per site $E^0/L$ as a function of $U/t$ for $\beta=1/4$ ($Q=\pi/2$), $L=40$. Columns correspond to different $\lambda$ values; lines show different cluster sizes $N_c$; grey dashed is PBC DMRG ($\chi=128$).}
    \label{fig:v_abs_energy_1-4}
\end{center}

\Cref{fig:v_compressibility} shows the filling curves for the AAH model at $\beta=1/2$ with finite $\lambda$, comparing different cluster sizes $N_c$. The quasiperiodic potential modifies the filling structure, particularly at large $\lambda$ where it competes with the Hubbard $U$.

\begin{center}
    \includegraphics[width=0.98\linewidth]{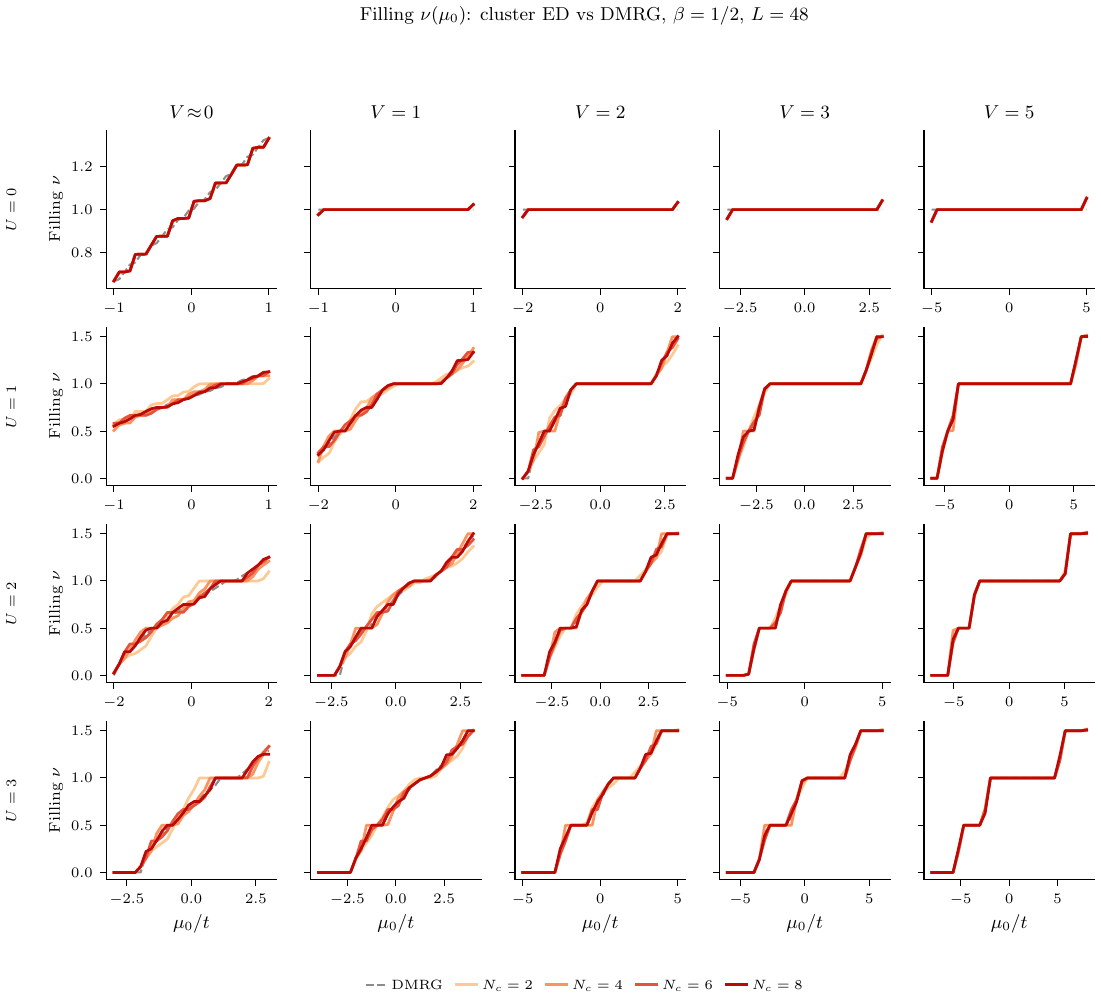}
    \captionof{figure}{Filling $\nu(\mu_0)$ for $\beta=1/2$, $L=48$, at various $U$ and $\lambda$ values. Rows correspond to $U$ values; columns to $\lambda$ values. Different cluster sizes $N_c$ are shown as colored lines; grey dashed is OBC DMRG ($\chi=32$).}
    \label{fig:v_compressibility}
\end{center}

\Cref{fig:v_filling_relerr} shows the relative filling error. As with the pure Hubbard case, the largest errors occur where the filling changes rapidly with $\mu_0$.

\begin{center}
    \includegraphics[width=0.98\linewidth]{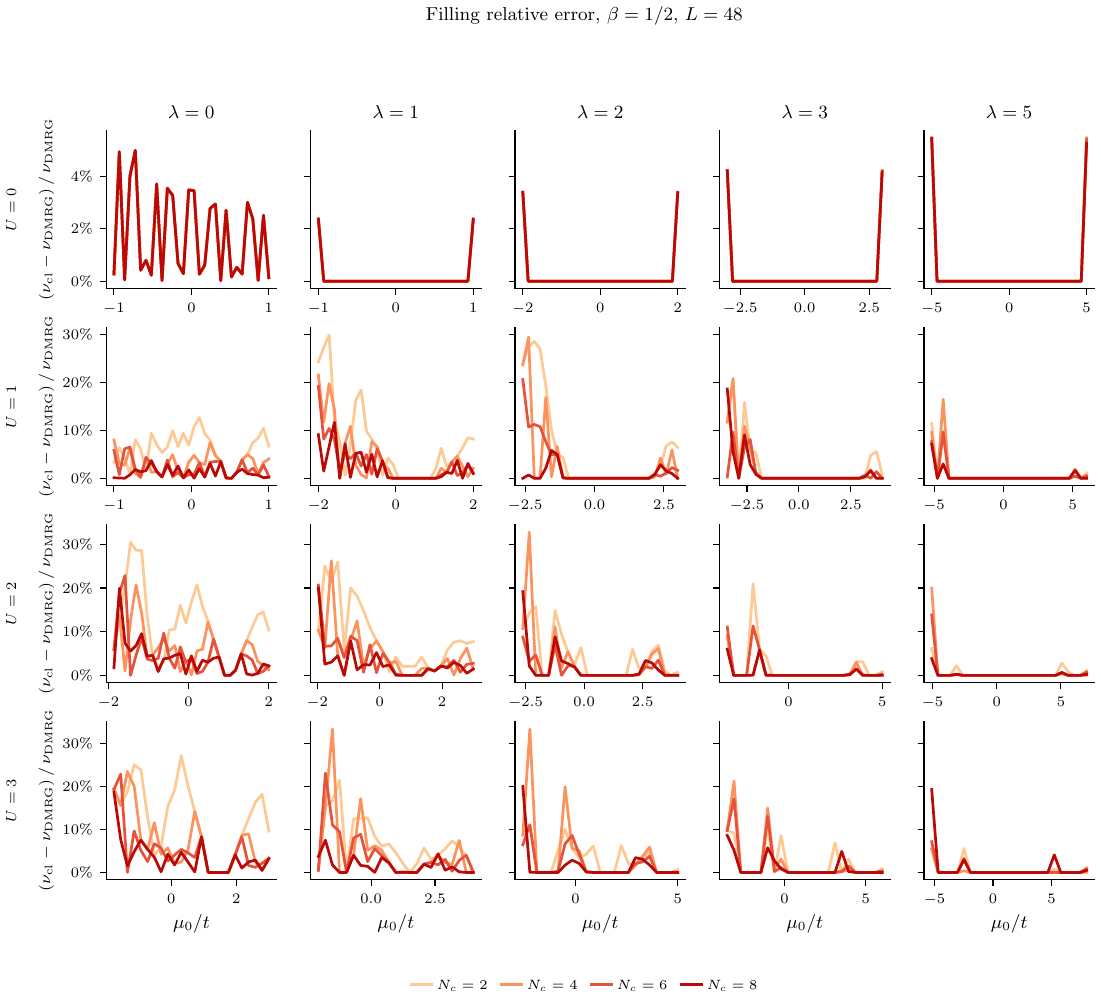}
    \captionof{figure}{Relative filling error $|\nu_{\mathrm{cl}} - \nu_{\mathrm{DMRG}}|/\nu_{\mathrm{DMRG}}$ vs $\mu_0/t$ for $\beta=1/2$, $L=48$. Same layout as \cref{fig:v_compressibility}. Reference is OBC DMRG ($\chi=32$).}
    \label{fig:v_filling_relerr}
\end{center}

\subsection{Additional data for fixed supercluster comparison}
\label{sec:app_fixedsc}

\Cref{fig:fixed_sc_abs_energy} shows the absolute energy for the fixed supercluster comparison at $\beta=1/2$, $L=48$, corresponding to the relative error in \cref{fig:v_convergence_fixedsc_v_pi}. Here the $U$-dependence is shown at each $\lambda$ value for the largest supercluster size, comparing different $(N_c, \Delta)$ pairs.

\begin{center}
    \includegraphics[width=0.98\linewidth]{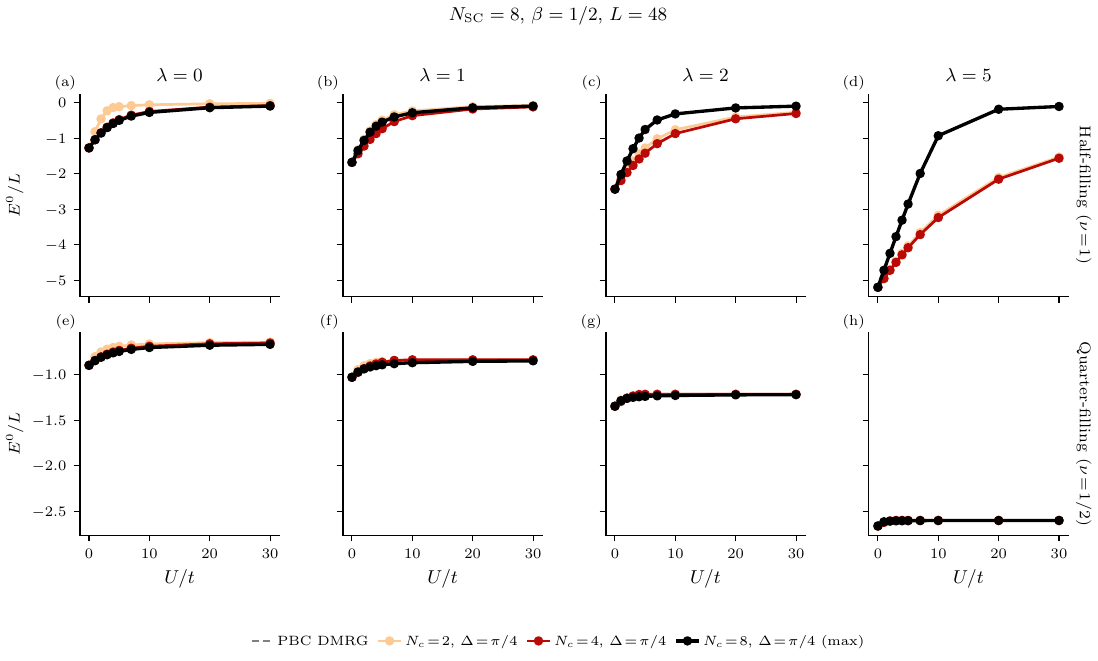}
    \captionof{figure}{Absolute ground state energy per site $E^0/L$ as a function of $U/t$ for fixed supercluster size $N_{\mathrm{SC}}=8$, $\beta=1/2$, $L=48$. Columns correspond to different $\lambda$ values; top row is half-filling, bottom row is quarter-filling. Maximal separation is shown in black; grey dashed is PBC DMRG ($\chi=128$).}
    \label{fig:fixed_sc_abs_energy}
\end{center}

\section{Form of the interaction in the discarded scheme}
\label{sec:appendix_discard_scheme}

In this appendix we derive a microscopic real-space representation of the
cluster-truncated Hubbard interaction under the \emph{discard} (open-boundary)
convention introduced below \cref{eq:cluster_bc}.
Crucially, the discarded scheme remains \emph{block diagonal} in the cluster
label \(K\): clusters do not couple, and the interaction is a sum of independent
cluster interactions.

For notational clarity we work in one dimension with a contiguous index set.
The extension to higher dimensions with rectangular index sets factorizes
componentwise.

\subsection{1D setup: disjoint clusters and the discard rule}

Let the microscopic lattice have \(L\) sites with spacing \(a\), and momenta
\(k = 2\pi j/(La)\), \(j\in\mathbb{Z}_L\).
Fix a cluster size \(N_c\) and a cluster spacing \(\Delta\) (a multiple of
\(2\pi/(La)\)) such that the Brillouin zone decomposes into disjoint clusters
\[
\mathrm{BZ}=\bigsqcup_{K\in\mathcal{K}} \mathcal{C}_{K},
\qquad
\mathcal{C}_{K}:=\{K+k_j:\ j=0,1,\dots,N_c-1\},
\qquad
k_j:=j\Delta,
\]
with \(|\mathcal{K}|=L/N_c\).
(Equivalently, \(\mathcal{K}\) is a set of coset representatives and
\(\mathcal{C}=\{k_j\}\) is the fixed set of relative momenta.)

In the \emph{discard} convention, for a given transfer \(q=k_m=m\Delta\) we keep
only those terms for which \(k_j+q\) and \(k_{j'}-q\) remain inside the index set
\(\{0,\dots,N_c-1\}\).  For minimally separated elements this means
\[
j+m\in\{0,\dots,N_c-1\},\qquad j'-m\in\{0,\dots,N_c-1\}.
\]

\subsection{Cluster interaction in momentum space (manifest \(K\)-decoupling)}

Define the truncated (discarded) coarse cluster density operator (CCDO)
\begin{equation}
\label{eq:rho_discard_def}
\rho^{(K)}_{\sigma}(m)
:=\sum_{\substack{j=0\\ j+m\in[0,N_c-1]}}^{N_c-1}
c^{\dagger}_{K+k_{j+m},\sigma}\,c_{K+k_j,\sigma},
\qquad m\in\mathbb{Z},\ |m|\le N_c-1.
\end{equation}
(For \(m<0\) this is the same definition, i.e. \(j+m\) must still lie in
\([0,N_c-1]\).)

Then the discarded cluster interaction is
\begin{equation}
\label{eq:Hint_discard_CCDO}
H^{\mathrm{(disc)}}_{\mathrm{int}}
=\frac{U}{N_c}\sum_{K\in\mathcal{K}}\;
\sum_{m=-(N_c-1)}^{N_c-1}
\rho^{(K)}_{\uparrow}(m)\,
\rho^{(K)}_{\downarrow}(-m).
\end{equation}
This expression is, as in the wrap scheme, exactly block diagonal in \(K\).

\subsection{Microscopic real-space form and the appearance of triangular weights}

We now Fourier transform Eq.~\eqref{eq:Hint_discard_CCDO} to the microscopic
lattice.  Substituting
\(
c_{p,\sigma}=\frac{1}{\sqrt{L}}\sum_{R}e^{-ipR}c_{R,\sigma}
\),
into the CCDO, we obtain:
\begin{equation}
\label{eq:rho_discard_realspace_bilinear}
\rho^{(K)}_{\sigma}(m)
=\frac{1}{L}\sum_{R_1,R_2}
e^{iK(R_1-R_2)}
\Bigg[\sum_{\substack{j=0\\ j+m\in[0,N_c-1]}}^{N_c-1}
e^{ik_j(R_1-R_2)}\Bigg]
e^{i k_m R_1}\;
c^{\dagger}_{R_1,\sigma}c_{R_2,\sigma}.
\end{equation}
The bracket is a \emph{truncated} Dirichlet sum because of the discard rule.
For the contiguous set \(k_j=j\Delta\), the allowed values of \(j\) are
\(j=0,\dots,N_c-1-m\) when \(m\ge 0\), and \(j=-m,\dots,N_c-1\) when \(m<0\).
Define the truncated Dirichlet kernel
\begin{equation}
\label{eq:Dirichlet_trunc_def}
\mathcal{D}_{M}(x)
:=\sum_{j=0}^{M-1}e^{ij\Delta x}
=
e^{i\frac{(M-1)\Delta x}{2}}
\frac{\sin\!\big(\frac{M\Delta x}{2}\big)}{\sin\!\big(\frac{\Delta x}{2}\big)}.
\end{equation}
Then the bracket in Eq.~\eqref{eq:rho_discard_realspace_bilinear} can be written
as
\begin{equation}
\label{eq:bracket_as_truncD}
\sum_{\substack{j=0\\ j+m\in[0,N_c-1]}}^{N_c-1}
e^{ik_j(R_1-R_2)}
=
\begin{cases}
\mathcal{D}_{N_c-m}(R_1-R_2), & m\ge 0,\\[4pt]
e^{-i m\Delta (R_1-R_2)}\mathcal{D}_{N_c-|m|}(R_1-R_2), & m<0,
\end{cases}
\end{equation}
where the extra phase for \(m<0\) arises from the shifted summation window \(j=|m|,\dots,N_c-1\) and is absorbed into the overall phase in Eq.~\eqref{eq:W_discard_def}.

Substituting Eq.~\eqref{eq:rho_discard_realspace_bilinear} into
Eq.~\eqref{eq:Hint_discard_CCDO}, one obtains after a short calculation
\begin{align}
\label{eq:Hint_discard_realspace_general}
H^{\mathrm{(disc)}}_{\mathrm{int}}
&=\frac{U}{N_c\,L^2}\sum_{K\in\mathcal{K}}
\sum_{R_1,R_2,R_3,R_4}
e^{iK(R_1-R_2+R_3-R_4)}\;
\mathcal{W}(R_1-R_2,R_1-R_3,R_3-R_4)\nonumber\\
&\qquad\times
c^{\dagger}_{R_1,\uparrow}c_{R_2,\uparrow}\,
c^{\dagger}_{R_3,\downarrow}c_{R_4,\downarrow},
\end{align}
where all dependence on the discard boundary condition is packaged into the
weight
\begin{equation}
\label{eq:W_discard_def}
\mathcal{W}(x,y,z)
:=
\sum_{m=-(N_c-1)}^{N_c-1}
\Big[\mathcal{D}_{N_c-|m|}(x)\Big]\,
\Big[\mathcal{D}_{N_c-|m|}(z)\Big]\,
e^{i m\Delta\, y},
\end{equation}
with \(x=R_1-R_2\), \(y=R_1-R_3\), \(z=R_3-R_4\).
Equation~\eqref{eq:W_discard_def} highlights the key distinction from the wrap
case: the discard rule causes the Dirichlet kernels to shrink with increasing
momentum transfer, since the number of terms retained in each truncated sum is
\(N_c-|m|\), decreasing linearly from \(N_c\) at \(m=0\) to \(1\) at
\(|m|=N_c-1\). This \(m\)-dependent truncation couples the three spatial
arguments of \(\mathcal{W}\) and prevents the factorization that, in the wrap
case, follows from the discrete convolution theorem. The only Fourier dependence
on the cross-spin separation enters as a single oscillatory factor
\(e^{im\Delta (R_1-R_3)}\).

Equation~\eqref{eq:Hint_discard_realspace_general} is the microscopic real-space
form of the discarded cluster interaction.  The \(\sum_{K\in\mathcal K}\) factor
enforces center-of-mass conservation only \emph{modulo the coarse lattice}
dual to \(\mathcal K\); correspondingly, the interaction is not strictly local on
the microscopic lattice unless \(N_c=L\).

The primary algebraic distinction from the wrap convention is the appearance of
the triangular factor \(N_c-|m|\), which damps the oscillatory Dirichlet-type sums
and yields a nonnegative, smoothly decaying real-space interaction.

\twocolumngrid
\bibliography{refs}

\end{document}